\documentclass[11pt, reqno]{amsart}
\usepackage{latexsym}
\usepackage[english]{babel}
\usepackage{amssymb,amsmath,amsfonts,amscd}
\usepackage{mathrsfs, verbatim, stmaryrd,mathrsfs, enumitem}
\usepackage{graphics,graphicx,color,fancyhdr}
\usepackage{color}
\usepackage{xcolor}
\usepackage{cases}
\usepackage{cite}

\newcommand{\R}{\mathbb{R}}
\newcommand{\dd}{,{\dots},}

\newcommand{\ep}{\varepsilon}
\newcommand{\E}{\mathbb{E}}
\newcommand{\eq}[1]{\begin{equation}
\begin{split}
#1
\end{split}
\end{equation}}
\newcommand{\eqh}[1]{\begin{equation*}
\begin{split}
#1
\end{split}
\end{equation*}}
\newcommand{\lr}[1]{\left( #1 \right)}
\newcommand{\pix}{\rho_0}
\newcommand{\pia}{\pi_{a}}

\newcommand{\lv}{\left\vert}
\newcommand{\rv}{\right\vert}
\renewcommand{\qed}{{ \hfill
                       {\unskip\kern 6pt\penalty 500
                       \raise -2pt\hbox{\vrule\vbox to 6pt{\hrule width 6pt
                       \vfill\hrule}\vrule} \par}  \medskip }}
\newtheorem{theorem}{Theorem}[section]
\newtheorem{lemma}[theorem]{Lemma}

\newtheorem{Note}[theorem]{Note}

\newtheorem{coro}[theorem]{Corollary}

\newcommand{\cP}{\mathcal{P}}
\newcommand{\pa}{\partial}
\newcommand{\ta}{\mathcal{A}^N}
\newcommand{\cL}{\mathcal{L}}

\newcommand{\cLs}{\mathcal{L}_S}
\newcommand{\cLf}{\mathcal{L}_F}
\newcommand{\dr}{\mathcal{F}}
\newcommand{\adr}{\bar{\mathcal{F}}}
\newcommand{\bfa}{{\bf{a}}}
\newcommand{\bfx}{{\bf{x}}}
\newcommand{\mui}{\mu_{\bfx}}

\newcommand{\cPbar}{{\bar\cP^N}}
\newcommand{\Xbar}{\bar{X}^N}
\newcommand{\fone}{f_t^1}
\newcommand{\Kbar}{\bar K^N{}}

\newcommand{\Ione}{\mathcal{I}_1^N}
\newcommand{\Itwo}{\mathcal{I}_2^N}
\topmargin       -0.0 cm   
\textwidth       16.0 cm \textheight      23.0 cm
\oddsidemargin    0.0 cm   
\evensidemargin   0.0 cm   

\begin{document} 
\title{Fast non mean-field networks: uniform in time averaging }

\author{J. Barr\'e$^{(1)}$, P. Dobson$^{(2)}$, M. Ottobre$^{(3)}$ and E. Zatorska$^{(4)}$}
\address{(1) Institut Denis Poisson, Universit\'e d'Orl\'eans, CNRS, Universit\'e de Tours, France and Institut Universitaire de France, julien.barre@univ-orleans.fr}
\address{(2) Delft Institute of Applied Mathematics, Delft University of Technology,
	2628 XE Delft, The Netherlands, p.dobson@tudelft.nl}
\address{(3) Heriot-Watt University, Mathematics Department, Edinburgh EH14 4AS, UK. m.ottobre at hw.ac.uk}
\address{(4) University College London,London WC1E 6BT, UK.  e.zatorska@ucl.ac.uk}
\maketitle
\begin{abstract}
    We study a population of $N$ particles, which evolve according to a diffusion process and interact through a dynamical network. In turn, the evolution of the network  is coupled to the particles' positions. In contrast with the mean-field regime, in which each particle interacts with every other particle, i.e. with $O(N)$ particles, we consider the a priori more difficult case of a {\em sparse network}; that is,  each particle interacts, on average, with $O(1)$ particles.   We also  assume that the network's dynamics is much faster than  the particles' dynamics, with the time-scale of the network described  by a parameter $\ep>0$. We combine the averaging ($\ep \rightarrow 0$) and  the many particles ($N \rightarrow \infty$) limits and  prove that the evolution of the particles' empirical density is described (after taking both limits) by a non-linear Fokker-Planck equation;  we moreover  give conditions under which such limits  can be taken {\em uniformly in time}, hence providing a criterion under which the limiting non-linear Fokker-Planck equation is a good approximation of the original system uniformly in time.  The heart of our proof consists of controlling precisely the dependence in $N$ of the averaging  estimates.
    
    \vspace{5pt}
    {\sc Keywords.} Interacting particle systems,  Markov Semigroups, Averaging methods, sparse interaction, non-linear Fokker-Plank equation.
    
\vspace{5pt}
{\sc AMS Classification (MSC 2020).} 60K35, 47D07, 60J60, 35Q84, 35Q82, 82C31.
\end{abstract}

\section{Introduction}
Systems of diffusing particles with  mean-field interactions, i.e. ``all to all"-type interactions,  have been extensively studied in various contexts. In particular it is well-known that, under mild conditions, the empirical density of  $N$ such particles can be described in the large $N$ limit by a non-linear Fokker-Planck equation;  classical references for this  are \cite{Sznitman, Meleard}. The fundamental reason behind this behavior is a law of large numbers: each particle interacts with many others, hence feels an almost deterministic force.  
A natural question,  which is relevant for modelling purposes, regards interactions mediated by a network where two particles, or more generally two agents, interact only if they are linked together. 
When each particle is linked, on average, to $O(N)$ particles --i.e., borrowing nomenclature from \cite{Dorogovtsev13}, when the {\em average degree}  of the interaction network diverges -- each particle again typically interacts with many others, hence one may still expect a kind of law of large numbers to apply, leading to a non-linear Fokker-Planck equation in the macroscopic limit. With some caveats and subtleties, this is indeed the case, as shown in \cite{Delattre16,Lucon18}; \cite{Coppini19} and \cite{Oliveira19} go further in this direction, and tackle the large deviation regime.
On the other hand, when the mean degree of the interaction network is bounded (in $N$), i.e. when each particle interacts on average with $O(1)$ particles so that the network is {\em sparse},  one may expect a very different, and more complicated, phenomenology; this case is studied e.g. in \cite{Oliveira18}.\\
In many situations, it is actually important to take one further step, and  consider a dynamical network whose evolution is coupled with the particles' dynamics (for reviews see for instance \cite{Gross07,Porter19}). This is clear in epidemiology, where the network of contacts between people is influenced by the spreading of the disease \cite{Gross06,Marceau10}; in models of opinion formation, where people with similar opinions may tend to interact more together \cite{Zanette06}; but also in the modelling of various biological phenomena \cite{Degond16,Taylor17,Peurichard17,Taylor19,BDPZ} (note that \cite{Degond16,Peurichard17} tackle the extra difficulty of considering individual units which are fibers, hence with an orientational degree of freedom, and not merely point particles). 

Details of the model that we will consider  can be found in  Section \ref{subsec:modeldescription} below; for the time-being we summarize the main features of the dynamics that we will be studying:  in this work, which is inspired by the biological applications of  \cite{Degond16,Taylor17,Peurichard17,Taylor19} and by the framework of \cite{BDZ,BCDPZ},  we consider a system of $N$ particles, each of them 
moving in $\R^n$,  and interacting through a sparse graph/network (we will use the words network and graph interchangeably), i.e. particle $i$ and $j$ interact only when they are linked by the edge of an evolving network.  In turn, the evolution of the network is coupled to the particles' dynamics through the following mechanism:  the network's edges are created and deleted at random times described by  Poisson processes; in particular, the rate of the Poisson process governing the link formation mechanism between particle $i$ and particle $j$ depends on the distance between such particles and it is designed in a way that  a link can only be formed if the particles are close enough (i.e., within a fixed distance $R>0$ of each other). Moreover,  our choice of the link formation and destruction rates guarantees that the graph remains sparse.  

These coupled network/particles systems are difficult to analyze mathematically in full generality, and it is then important to understand some limit regimes which may be amenable to analysis. 
In this article, we study the limit of a time-scale separation between network and particles' dynamics and focus on the regime in which the network's dynamics evolves much faster than the particles' dynamics; the time-scale of the network is governed by a parameter $\varepsilon>0$, which enters the evolution in such a way that $\ep\ll 1$ corresponds to accelerating the network's  dynamics.  In this regime one can employ averaging techniques to ``average out" the effect of the network and consider an {\em effective} interaction between particles.  In particular, we take both the averaging ($\ep\rightarrow 0$) and many particles ($N \rightarrow \infty$) limits and prove that the {\em effective macroscopic description} of the dynamics (i.e. the one obtained after taking both the $\ep$ and $N$ limits) is given by an appropriate non-linear Fokker-Planck equation.  Let us emphasize from now that we take such limits {\em together};  this is not only needed if one is after a rigorous treatment but, on a technical level, this is what enables us to treat the a priori more difficult case of a sparse network (see comments in the next paragraph below,  at the end of  Section \ref{sec:main} and at  the beginning of Section \ref{Sec:MF} on this point). Furthermore, since one is often interested in the asymptotic ($t\rightarrow \infty$) behavior of such models, we  identify sufficient conditions so that the { effective macroscopic description}  is a  valid approximation of the initial system  {\em uniformly in time.}  We will comment on the relevance of such uniform in time estimates in Note \ref{Note:mainthm}; for the time being we point out in passing that, as a bi-product, our approach produces the first (to the best of our knowledge) uniform in time averaging result. 

\smallskip
A comprehensive literature review of the subject is out of reach, so we mention only the works which are most relevant to our context. The papers  
\cite{Degond16,BDZ,BCDPZ,Taylor17} consider first the limit of a large population ($N \rightarrow \infty$ with our notation), without the hypothesis of a fast network dynamics, obtaining, in the limit, kinetic-like equations describing the density of the law of the coupled evolution of the particles and the network. To do so, these articles make  formal ``propagation-of-chaos"-like ansatzes, which may be reasonable approximations, but are unlikely to be fully valid in the context we consider here of sparse networks. The works \cite{Degond16,BDZ,BCDPZ} then obtain an effective macroscopic description by taking formally the fast network (i.e. $\ep \rightarrow 0$) limit in these kinetic equations. They obtain the  analog of the nonlinear Fokker-Planck equation  that we obtain in this paper, equation \eqref{macro}. 
In \cite{BDPZ} it was suggested that one should be able to rigorously derive the effective macroscopic description \eqref{macro}, without the formal intermediate step of the kinetic equations (i.e. without any formal ansatz). This is precisely what we achieve in this paper.  

Let us also observe that this interplay between many particle limit and a fast randomly evolving interaction is relevant in the context of two-layers neural networks as well, as new exciting developments are linking the theory of neural networks with the framework of interacting particle systems \cite{Montanari18,VdE18,Spil}: the units of the neural network are described by parameters to be optimized using the available data; the dynamics induced by the learning procedure (a stochastic gradient descent) translates as an interacting particles dynamics for these parameters, in which the interaction is random and changes as new learning data are taken into account. {Further recent relevant literature on mean-field limits for dynamically evolving graphs can be found in \cite{Bhamidi,Bay}.}

\medskip
Our main result is Theorem \ref{thm:main1}, which proves the convergence of the empirical density of particles to the solution of an appropriate  non-linear Fokker-Planck equation (namely equation \eqref{macro} below), in the combined limit $N\to \infty$ (large population) and $\ep \rightarrow 0$ (large time scale separation). Part b of the theorem provides sufficient conditions under which this convergence is uniform in time. Let us briefly comment on the  main technical difficulties which are tackled in this paper.  
Firstly, the proof combines averaging techniques and ``mean-field-like" ones; by themselves, these techniques are known, so the issue here is to combine them and  control precisely the dependence in $N$ of the averaging estimates, in order to be able to take the many particles limit (the key result in this respect is given by Theorem \ref{theo:semigroup}). We believe that the $N$ and $\varepsilon$ dependence of our estimates is optimal, and we heuristically explain at the end of Note \ref{Note:mainthm} why we think this is the case.  Secondly,  the idea  to obtain uniform in time  estimates is borrowed and re-adapted from \cite{CDO,CrisanOttobre} and it hinges on  conquering exponentially fast decay (in time) for the space-derivatives of an appropriate  Markov Semigroup; 
further to \cite{CDO, CrisanOttobre},   we gained explicit dimension-dependence (i.e. dependence on $N$) of the constants appearing in such estimates, which was crucial to our proof.    More comments on this and on related literature in Note \ref{Note:mainthm}.
Finally, to tackle  the many particles limit  in the sparse-network regime, we show that the dynamics ``preserves sparsity": if the network is sparse at time zero, it will remain sparse at subsequent times (this is addressed in Lemma \ref{lem:Ione}). 
To conclude, we point out that, while the set up of the problem is stochastic, the whole method of proof relies solely on analytic techniques. 

\medskip
The paper is organized as follows. In Section \ref{sec:description} we describe the model that we will be analysing, present an intuitive derivation of the  effective  macroscopic description, and state our main result;  Section \ref{sec:main}, Section \ref{Sec:Av} and Section \ref{Sec:MF} are devoted to the proofs: Section \ref{sec:main} outlines the general strategy to prove Theorem \ref{thm:main1}, Section \ref{Sec:Av} details the averaging estimates with the crucial proof of Theorem \ref{theo:semigroup}, and Section \ref{Sec:MF} presents the many-particles estimates.

\section{Description of the model and statement of main results}
\label{sec:description}
\subsection{Description of the Model}\label{subsec:modeldescription}
We consider a system of $N$ interacting particles, each of them  moving in $\R^n$. The position of the $i-$th particle is denoted by $X_t^{i,N, \ep}$ (the reason for the superscript $\ep$ will be clarified below) and, collectively, we will use the notation  $\R^{Nn} \ni X^{N,\ep}_t:=(X^{1,N,\ep}_t, \ldots,X^{N,N,\ep}_t)$ for the whole process; such particles  are connected through an evolving graph which we describe by its adjacency matrix $A^{N,\ep}(t)=\{A_{ij}^{N,\ep}(t)\}_{1\leq i,j\leq N}$. For each $t\geq 0$, $A^{N, \ep}(t)$ is a symmetric $N\times N$ matrix with entries $0$ or $1$: for any pair  $(i,j)$, there is an edge between particles 
$i$ and $j$ at time $t$ if and only if $A_{ij}^{N, \ep}(t)=1$. The particles are not linked to themselves, and so we take $A_{ii}^{N, \ep}(t)=0$ for all $i=1,\ldots,N$. In other words, for every $t \geq 0$, $A^{N, \ep}(t) \in \ta$, where
$$
\ta\!:=\!\{\mbox{arrays } \bfa=\{a_{ij}\}_{1\leq i,j \leq N}\!:\!  a_{ij}=a_{ji} \!\in\!\{0,1\}\  \mbox{and}\ a_{ii}=0 \mbox{ for every } i=1,\ldots,N\}.
$$
The coupled equations of motion for the particles and the evolution of the graph are as follows:
\begin{subequations}\label{mainsf}
	\begin{align}
	dX_t^{i,N,\ep} &= - \nabla V(X_t^{i,N,\ep})dt+ \sum_{j\neq i} A_{ij}^{N,\ep}(t) K(X_t^{j,N,\ep}-X_t^{i,N,\ep}) dt +\sqrt{2D}dB_t^i  \label{slow}\\
	dA_{ij}^{N, \ep}(t) &= -A_{ij}^{N, \ep}(t_-)dN_{ij}^{d, \ep}(t) +[1-A_{ij}^{N, \ep}(t_-)]  dN_{ij}^{f,\ep}(t)\,.  \label{fast}
	\end{align}
\end{subequations}
We describe in turn all the contributions to the above dynamics and make assumptions on the initial data:
\begin{itemize}
	\item all the particles are subject to the same  {\em external potential}  $V:\R^n \rightarrow \R $ (as customary  $\nabla$ denotes $n$-dimensional gradient w.r.t. the  argument of $V$) and to randomness, as described by the processes $B_t^i$'s, $i=1 \dd  N$, which are  $n$-dimensional independent standard Brownian motions; $D>0$ is the diffusion constant;
	\item if  $A_{ij}^{N, \ep}(t)=1$ then at time $t$ particle $i$ and particle $j$  interact through the  {\em interaction force} $K:\R^n \rightarrow \R^n $ and they keep interacting as long as $A_{ij}^{N,\ep}(t)$ remains equal to one;
	\item in \eqref{fast},  $N_{ij}^{d,\ep}, N_{ij}^{f,\ep}$ are independent Poisson processes with intensities $\nu_d \ep^{-1}$ and  $\tilde{\nu}_f(x^i, x^j)\ep^{-1}$, respectively, where: $\ep$ is a fixed positive parameter;  $\nu^d>0$ is a constant (independent of $N, \ep,i$ and $j$), which we refer to as the {\em destruction rate};   $\tilde{\nu}_f: \R^n \times \R^n \rightarrow \R$ is a function, namely
	\begin{equation}\label{nutilde}
	\tilde{\nu}_f(x^i,x^j):= \nu_f^N \varphi_R (|x^i-x^j|)
	\end{equation} 
	with $\nu_f^N$ a positive constant, the {\em formation rate}, which depends on $N$ in a way that will be specified later (the $N$ dependence is chosen in  a way to ensure that the interaction network is sparse, see \eqref{Nscaling}) but is independent of everything else;  $|\cdot|$ denotes euclidean norm  in $\R^n$ and  $\varphi_R: \R \rightarrow [0,1]$ is a smooth  function with compact support in $[0,R]$ for some fixed $R>0$;
	\item the initial data $X_0^{N, \ep}, A^{N, \ep}(0)$ are i.i.d.~random variables;  we will make precise assumptions on the initial data in Section \ref{sec:2}, see Hypothesis \ref{H3new};
	\item finally, $t_-$ denotes left limit at $t$;  that is, for any c\'adl\'ag function $f:\R \rightarrow \R$, we set $f(t_-):= \lim_{s \rightarrow t_-} f(s)$. 
\end{itemize}
Precise assumptions on the potential $V$ and force kernel $K$ and on the initial data will be given in Section \ref{sec:2}. The evolution for the link $A_{ij}^{N,\ep}$ in  \eqref{fast} is a piecewise constant process which switches between the values 0 and 1 at random times according to  the Poisson clocks $N_{ij}^{d,\ep}$ and $N_{ij}^{f,\ep}$. In particular, equation \eqref{fast} is coupled to the particles' evolution \eqref{slow} only through the process $N_{ij}^{f,\ep}(t)$; this is a bit hidden in the notation that we have used in \eqref{fast}, so to clarify this fact let us point out that if $\varphi_R$ was the indicator function of the interval $[0,R]$ (denoted by $\mathbf{1}_{[0,R]}(\cdot)$), \eqref{fast} could be replaced by the following equivalent dynamics
\begin{equation}\label{fastprime}
dA_{ij}^{N, \ep}(t) = -A_{ij}^{N, \ep}(t_-)dN_{ij}^{d, \ep}(t) +[1-A_{ij}^{N, \ep}(t_-)] \mathbf{1}_{[0,R]}\left(\lv X_t^{i,N,\ep}- X_t^{j,N,\ep}\rv\right) d\breve{N}_{ij}^{f,\ep}(t)\, ,
\end{equation}
where now $\breve{N}_{ij}^{f,\ep}(t)$ is a simple Poisson process with rate $\nu_f^N$. 

In order to gain further intuition about the dynamics \eqref{slow}-\eqref{fast} one may think for example of $K$ as being an elastic force (i.e. a short-range repulsion and long-range attraction kind of interaction) so that, if $A_{i,j}^{N, \ep}(t)=1$ then at time $t$ particle $i$ and $j$ are linked via a spring (as well as being subject to the external potential $V$ and to randomness) and  
{when the Poisson clock $N_{ij}^{d,\ep}$ goes off then the link between the two particles is destroyed, i.e.  $A_{ij}^{N, \ep}$ switches to zero.  However, if  particle $i$ and $j$ are not linked at time $t$, i.e.  if $A_{ij}^{N, \ep}(t)=0$, then two things can happen: if 
	$|X^{i,N,\ep}_t - X^{j,N,\ep}_t| \leq R$, then the Poisson clock
	$N_{ij}^{f,\ep}$ may have a positive rate (depending on whether $\varphi_R\left(|X^{i,N,\ep}_t - X^{j,N,\ep}_t|\right)$ is bigger or equal to zero) and may go off, in which case a link is formed and $A_{ij}^{N,\ep}$ switches to one; if on the other hand $|X^{i,N,\ep}_t - X^{j,N,\ep}_t| > R$, then the Poisson clock $N_{ij}^{f,\ep}$ has zero rate (as in this case $\varphi_R\left(|X^{i,N,\ep}_t - X^{j,N,\ep}_t|\right)=0$) and does not go off, so that $A_{ij}^{N,\ep}$ remains equal to  zero at least as long as $|X^{i,N,\ep}_t - X^{j,N,\ep}_t|>R$. In other words the function $\varphi_R$ ensures that an edge/link can only be created between particles $i$ and $j$ if such particles are within distance $R$ of each other.}  

The existence of both weak and strong  solutions to system 
\eqref{slow}-\eqref{fast} is standard under our assumptions on the potentials $V$ and force kernel $K$ (see Note \ref{note:hypothesis}). For completeness, we quickly recall the strategy to build such solutions in Appendix \ref{app:A}.

\subsection{Main Result.} 
\label{sec:result}
We are concerned with understanding the behaviour of  system \eqref{slow}-\eqref{fast} in the limit $\ep\rightarrow 0$ and $N\rightarrow \infty$. Let us emphasize that these limits are taken {\em together} --  although, for expository purposes, in what follows  we present the heuristic derivation of our main results by formally taking the $\ep$ limit first (averaging limit)  and then  the $N\rightarrow \infty$ (large particle) limit. Exact proofs are contained in Section \ref{sec:main}, Section \ref{Sec:Av} and Section \ref{Sec:MF}. 

When $0<\ep \ll
1$, the links' dynamics is much faster than the particles' dynamics; this kind of situation is dealt with within the framework of stochastic averaging, see e.g. \cite{PS}. The  idea of stochastic averaging can be informally described as follows: because the process $A^{N, \ep}(t)$ \eqref{fast}  evolves much faster than the process $X_t^{N, \ep}$ \eqref{slow}, the former process will have effectively reached equilibrium while the latter has remained substantially unchanged.  In particular,  considering the slow process as ``frozen", say $X_t^{N, \ep}=\bfx \in \R^{Nn}$, the fast process will
converge, as $\ep\rightarrow 0$, to a stationary distribution, denoted by $\mui$ (this measure clearly depends on the ``frozen" values of the slow process). 
Note that, given the value of the slow process $X_t^{N, \ep}$, the components $A_{ij}^{N, \ep}(t)$ of the fast process evolve independently of each other; the evolution of each component is a simple two-state Markov process with generator given by the matrix  
$$
L^{(ij)}=\frac{1}{\varepsilon} 
\left( 
\begin{array}{cc}
-\nu_f^N\varphi_R(|x^i-x^j|) & \nu_f^N\varphi_R(|x^i-x^j|) \\
\nu_d & -\nu_d
\end{array}
\right) \,. 
$$
The invariant measure  $\mui$ of  $A^{N, \ep}(t)$ given $X_t^{N,\ep}$, is a probability measure  on the space  $\ta$.  Let us emphasize that this is the invariant measure for \eqref{fast}, given $X_t^{N, \ep}=\bfx$. Because of conditional independence, one just needs to find the invariant measure $\mui^{(ij)}$ for the dynamics of $A_{ij}^{N, \ep}$ (given $X_t^{N, \ep}$) and then the overall invariant measure on $\ta$ is just a product measure. The invariant measure for the process $A_{ij}^{N, \ep}(t)$ is the only normalised element of the kernel of the transpose of the matrix $L^{(ij)}$, see \cite{Norris}. A simple calculation yields$$
\mu^{(ij)}_{\bfx}(1)= \frac{\nu_f^N \,   \varphi_R (\lv x^i-x^j\rv )}{\nu_f^N \varphi_R (\lv x^i-x^j\rv )+\nu_d}, \quad \mu^{(ij)}_\bfx(0)= 1-\frac{\nu_f^N \varphi_R (\lv x^i-x^j\rv )}{\nu_f^N \varphi_R (\lv x^i-x^j\rv )+\nu_d} \,.
$$
Hence, the sought after invariant measure $\mui$ on $\ta$ is given by 
\begin{equation}
\mu_\bfx (\bfa) = \prod_{i\neq j} \left( \frac{\nu_f^N\varphi_R (\lv x^i-x^j\rv ) }{\nu_f^N\varphi_R (\lv x^i-x^j\rv ) +\nu_d} \delta_{a_{ij}-1} + \left[1- \frac{\nu_f^N\varphi_R (\lv x^i-x^j\rv ) }{\nu_f^N\varphi_R (\lv x^i-x^j\rv ) +\nu_d}  \right]\delta_{a_{ij}}\right) \,,
\end{equation}
for $\bfa=\{a_{ij}\} \in \ta$ (as customary,  
$\delta_a$ is the Kronecker delta, i.e $\delta_a=1$ if $a=0$ and it is equal to zero otherwise).

According to the classical averaging paradigm, as $\ep \rightarrow 0$ the fast-slow system \eqref{slow}-\eqref{fast} converges to the so-called  {\em averaged dynamics} (also often referred to as {\em effective dynamics}); the evolution equation for the averaged dynamics $\{\bar X^{i,N}_t\}_{1\leq i\leq N}$ is formally obtained by taking the average of drift and diffusion coefficients of  \eqref{slow}  with respect to the measure $\mui$. Because,  given $X_t^{N, \ep}$, we have 
\begin{align}
&\mathbb{E}_{\mu_{X_t^{N, \ep}}}\left[ \sum_{j\neq i} A_{ij}^{N,\ep}(t) K(X_t^{j,N,\ep}-X_t^{i,N,\ep}) \right] =  \sum_{j\neq i} \mathbb{E}_{\mu_{X_t^{N, \ep}}}\left[A_{ij}^{N, \ep}\right] K(X_t^{j,N, \ep}-X_t^{i,N,\ep}) \\
& = \sum_{j\neq i} \frac{\nu_f^N \varphi_R (|X^{i,N,\ep}-X^{j,N,\ep}|)}{\nu_f^N \varphi_R (|X^{i,N,\ep}-X^{j,N, \ep}|)+\nu_d}  K(X_t^{j,N,\ep}-X_t^{i,N,\ep}),
\end{align}
the resulting equation for $\bar X^{i,N}_t$ is 
\begin{align}\label{slow2}
d\bar X_t^{i,N}\! &=\! - \nabla V\!(\bar X_t^{i,N})dt\!+\!\sum_{j\neq i}\! \frac{\nu_f^N \varphi_R \lr{\lv \bar X_t^{i,N}- \bar X_t^{j,N}\rv }}{\nu_f^N \varphi_R\! \lr{\lv \bar X_t^{i,N}\!-\! \bar X_t^{j,N}\rv }\!+\!\nu_d}   K(\bar X_t^{j,N}\!-\! \bar X_t^{i,N}) dt \!+\!\sqrt{2D} \, dB_t^i  \\
&= - \nabla V(\bar X_t^{i,N})dt+\sum_{j\neq i} \bar K^N(\bar X_t^{j,N}- \bar X_t^{i,N}) dt + \sqrt{2D} \, dB_t^i \, ,
\end{align}
with initial conditions $\bar X_0^{i,N}= X_0^{i,N, \ep}$ for $i \in \{1 \dd N\}$ and 
having set 
\begin{equation}\label{eq:Kbardef}
\bar K^N (x^{j}- x^{i}):= \frac{\nu_f^N \varphi_R (\lv x^i-x^j\rv )}{\nu_f^N \varphi_R (\lv x^i-x^j\rv )+\nu_d}   K(x^j- x^i). 
\end{equation}
For brevity we will later on use a more compact notation for the whole drift of the averaged equation, namely
\begin{equation}\label{fbari}
\adr^i(\bfx):= - \nabla V(x^i) + \sum_{j \neq i} \bar K^N (x^j-x^i) \,.
\end{equation}
We now assume the following  scaling of $\nu_f^N$ with $N$: 
\begin{equation}\label{Nscaling}
\nu_f^N=\nu_f/N\, ,
\end{equation}
where $\nu_f>0$ is a constant (independent of $N, \ep$ and $i,j$). 
{This choice of scaling for $\nu_f$ ensures that the average number of links per particle is at most of order $N\frac{\nu_f^N}{\nu_d+\nu_f^N} =O(1)$. That is, heuristically,   each particle interacts with $O(1)$ particles (on average), so that  the interaction graph is sparse.}
Equation \eqref{slow2} has now the canonical form of interacting diffusions considered e.g. in \cite{Sznitman}. As $N\rightarrow \infty$ the particles become independent of each other and, in the limit,  each of them evolves according to the following $n-$dimensional non-linear diffusion (here the non-linearity is {\em in the sense of McKean})
\begin{equation}\label{interX}
dX_t = - \nabla V(X_t)dt+ \left(\bar{K}\ast \rho_t \right) (X_t) dt +\sqrt{2D}dB_t.
\end{equation}
In the above $\ast$ denotes convolution (over $\R^n$), the function $\bar K:\R^n \rightarrow \R^n$ is defined as
\begin{equation} \label{barK}
\bar K(x)= \frac{\nu_f}{\nu_d} \varphi_R(\lv x \rv)  \, K(x), \quad x \in \R^n \, ,
\end{equation}
and $\rho_t: \R^n \rightarrow \R$ is the (density of the) law of the process $X_t$ itself; such a function  solves the  following non-linear Fokker-Planck equation: 
\eq{ \label{macro}
	\partial_t \rho_t(x) &= \nabla\cdot(\rho_t(x)\nabla V(x))  -\nabla \cdot\left(\rho_t(x) ({\bar K}*  \rho_t)(x)\right)+D \Delta \rho_t(x),\\
	\rho_t(x)|_{t=0}&=\rho_0(x),
}

Note that the expression of the kernel $\bar K$ can be formally obtained by taking sums over $j$ in \eqref{eq:Kbardef}, using \eqref{Nscaling} and letting $N \rightarrow \infty$.  
{Equations such as \eqref{macro} find applications in many areas of natural sciences, and have a rich phenomenology (see \cite{frank} for a textbook reference). It is worth mentioning that if $\bar{K}$ is the gradient of some function, then \eqref{macro} has a gradient flow structure, a very helpful fact to elucidate its qualitative behavior.}

Our main result can then be stated as follows -- the hypothesis under which such a result holds are detailed in Section \ref{sec:2}. 
\begin{theorem}\label{thm:main1}
	With the notation introduced so far, let the potential $V$ and force kernel $K$ satisfy  Hypothesis \ref{H1}.  Let  $\hat{\mu}_{X^{N,\varepsilon}_t}$ be the empirical measure of the particle system $X_t^{i,N, \varepsilon}$ evolving according to \eqref{slow}-\eqref{fast}, namely
	$$\hat{\mu}_{X^{N,\varepsilon}_t}(x):=\frac{1}{N}\sum_{i=1}^N \delta_{X_t^{i, N, \ep}}(x) \, , $$ 
	and let $\rho_t$ be the solution to \eqref{macro}. 
	If  the  initial data for the particle system \eqref{slow}-\eqref{fast} and for the PDE \eqref{macro} satisfy Hypothesis \ref{H3} and Assumption \ref{H3new} then, 
	\begin{description}
		\item[a)]{ for every $u \in C_b^2(\R^n)$ (i.e. twice differentiable, bounded and with bounded first and second derivatives), there exists a constant $C>0$ ( independent of $\varepsilon, T$ and $N$ but dependent on $u$) such that 
			\eq{\label{main1_statement}
				\sup_{t \in [0,T]}\E \left( \int_{\R^n} u(y) \,  d \hat{\mu}_{X^{N,\varepsilon}_t}(y)\, - \, \int_{\R^n} u(y) \rho_t(y) dy  \right)^2
				\leq   \frac{C}{N}e^{CT} + C (1+ T) N \varepsilon ,
			}
			for every $T>0, \, \varepsilon>0$ and $N \in \mathbb N_+$. Hence, as $\ep\rightarrow 0$ and $N\rightarrow \infty$ in such a way that $N\ep \rightarrow 0$, the particle system \eqref{slow}-\eqref{fast} converges to the solution of \eqref{macro}.   
			\item[b)] If, in addition, the external potential  $V$ satisfies Hypothesis \ref{H2} with ${\kappa}_3 \geq \max(\kappa_{\rm av},\kappa_{\rm mf})$ (where $\kappa_{\rm av}$ and $\kappa_{\rm mf}$ are defined in Lemma \ref{lem:derivativeest}, and Lemma \ref{lem:step2}, respectively, and $\kappa_3$ appears in Hypothesis \ref{H2}) then for every $u \in C_b^2(\R^n)$ (there exists a constant $C>0$ (independent of $\varepsilon$, time and $N$ but dependent on $u$) such that 
			\eq{\label{main0_statement}
				\sup_{t \geq0}\E \left( \int_{\R^n} u(y) \,  d \hat{\mu}_{X^{N,\varepsilon}_t}(y)\, - \, \int_{\R^n} u(y) \rho_t(y) dy  \right)^2 
				\leq \frac{C} {N}+CN\ep\, , 
			} 
			for every $\varepsilon>0$ and $N \in \mathbb N_+$.}
		Hence, as $\ep\rightarrow 0$ and $N\rightarrow \infty$ in such a way that $N\ep \rightarrow 0$, the particle system \eqref{slow}-\eqref{fast} converges to the solution of \eqref{macro} uniformly in time.
	\end{description}
\end{theorem}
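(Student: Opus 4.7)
The plan is to introduce the averaged particle system $\bar X^{i,N}_t$ of \eqref{slow2} as an intermediate object and split the error by the triangle inequality into an \emph{averaging error} and a \emph{mean-field error},
\begin{equation*}
\int_{\R^n} u \, d\hat\mu_{X^{N,\varepsilon}_t} - \int_{\R^n} u \, \rho_t \, dy \;=\; \Bigl[\int u \, d\hat\mu_{X^{N,\varepsilon}_t} - \int u \, d\hat\mu_{\bar X^N_t}\Bigr] + \Bigl[\int u \, d\hat\mu_{\bar X^N_t} - \int u \, \rho_t \, dy\Bigr],
\end{equation*}
squaring and estimating the two brackets separately. I would drive $X^{N,\varepsilon}$ and $\bar X^N$ by the same Brownian motions $(B^i)$ with the same initial data, so that the coupling is at the level of trajectories.

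For the averaging error I would work at the level of Markov semigroups. Setting $U_N(\bfx) := \frac{1}{N}\sum_{i=1}^N u(x^i)$ so that $\int u\, d\hat\mu_{X^{N,\varepsilon}_t} = U_N(X^{N,\varepsilon}_t)$, an It\^o computation applied to $\bar P^N_{T-t} U_N(X^{N,\varepsilon}_t)$ along the full dynamics (with $\bar P^N_t$ the Markov semigroup of $\bar X^N_t$) expresses $(P^{N,\varepsilon}_T - \bar P^N_T) U_N$ as a time integral of a \emph{fluctuating-drift} term: for each $i$ the original drift $\sum_{j\neq i} A^{N,\varepsilon}_{ij}(t) K(X^j_t - X^i_t)$ is compared to $\adr^i(X^{N,\varepsilon}_t) + \nabla V(X^{i,N,\varepsilon}_t)$, the difference having zero $\mui$-average. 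Solving a Poisson equation against the fast generator $\cLf$ and integrating by parts in time extracts a factor $\varepsilon$ at the cost of one extra spatial derivative of $\bar P^N_{T-t} U_N$; this is exactly where Theorem \ref{theo:semigroup} enters, providing sharp $N$-dependent bounds on $\nabla_{\bfx}\bar P^N_t U_N$ and $\nabla^2_{\bfx} \bar P^N_t U_N$. The explicit sum over $i\in\{1,\ldots,N\}$ produces the $N\varepsilon$ factor in the bound, and keeping the interaction graph sparse uniformly in time (Lemma \ref{lem:Ione}) is what prevents the drift fluctuation of each individual particle from being worse than $O(1)$.

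For the mean-field error I would follow Sznitman's coupling: introduce i.i.d. copies $Y^i_t$ of \eqref{interX} driven by the same Brownian motions and with the same initial data as $\bar X^{i,N}_t$, and Gr\"onwall $\E|\bar X^{i,N}_t - Y^i_t|^2$. Combined with the $O(1/N)$ law-of-large-numbers fluctuation of the empirical measure built from the $Y^i_t$, this yields the $\frac{C}{N}e^{CT}$ contribution of part (a). One minor twist is that $\bar K^N$ differs from $\bar K$ by an $O(\nu_f^N/\nu_d)=O(1/N)$ correction per interacting pair, but summed over the $O(1)$ neighbours of a typical particle this is absorbed into the same $1/N$.

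The hard part is part (b). The exponential $e^{CT}$ above appears twice: once as Gr\"onwall growth in the mean-field coupling, and once in the spatial-derivative estimates of $\bar P^N_t U_N$ that feed the averaging step. To replace it by a constant I would exploit Hypothesis \ref{H2}: strong convexity of $V$ with rate $\kappa_3 \geq \max(\kappa_{\rm av},\kappa_{\rm mf})$ yields, on one hand, exponential-in-time decay of the spatial derivatives of $\bar P^N_t U_N$ at rate $\kappa_{\rm av}$ (Lemma \ref{lem:derivativeest}, re-adapting \cite{CDO,CrisanOttobre} to track the dimension dependence explicitly) and, on the other hand, a contraction of the mean-field coupling at rate $\kappa_{\rm mf}$ (Lemma \ref{lem:step2}). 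These two facts render the time-integrals in the averaging and mean-field estimates uniformly bounded in $T$, turning $\frac{C}{N}e^{CT}$ into $\frac{C}{N}$ and $C(1+T)N\varepsilon$ into $CN\varepsilon$. The delicate point throughout is quantitative control of the $N$-dependence of \emph{all} constants simultaneously: the derivative bounds in Theorem \ref{theo:semigroup} must degrade at most in a way that does not cancel the $1/N$ gain from the law of large numbers, and the sparsity estimate of Lemma \ref{lem:Ione} must absorb every factor of $N$ that would otherwise come from summing the drift over neighbours.
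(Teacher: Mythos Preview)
Your overall strategy matches the paper's: split into an averaging error (handled by comparing the semigroups $\cP_t^{N,\ep}$ and $\bar\cP_t^N$ via a Poisson-equation/corrector argument, with derivative estimates on $\bar\cP_t^N$ and sparsity control on the graph) and a mean-field error (handled by Sznitman's coupling with i.i.d.\ nonlinear diffusions $Y^i$). The ingredients you list for both halves, and for the uniform-in-time upgrade under strong convexity, are exactly those the paper uses.

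There is, however, one genuine slip in your decomposition. Squaring first and using $(a+b)^2\le 2a^2+2b^2$ commits you to bounding $\E\bigl[U_N(X^{N,\ep}_t)-U_N(\bar X^N_t)\bigr]^2$, a \emph{second moment} of the coupled pair. But the It\^o/semigroup computation you describe only delivers $(\cP_t^{N,\ep}-\bar\cP_t^N)U_N$, i.e.\ the \emph{difference of expectations} $\E_B U_N(X^{N,\ep}_t)-\E_B U_N(\bar X^N_t)$; the cross term $\E[U_N(X^{N,\ep}_t)\,U_N(\bar X^N_t)]$ depends on the joint law of the synchronous coupling and is not a semigroup quantity at all. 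The paper sidesteps this by writing $E=E^{\rm av}+E^{\rm part}$ as an \emph{exact algebraic identity} rather than a triangle inequality: with $A=U_N(X^{N,\ep}_t)$, $B=U_N(\bar X^N_t)$, $c=\int u\rho_t$, one has $E^{\rm av}=\E[A^2-B^2]-2c\,\E[A-B]$, so $|E^{\rm av}|$ is bounded by semigroup differences applied to $U_N$ \emph{and} to $U_N^2$ (equivalently, after exchangeability, to $u^{(1)}(\bfx)=u(x^1)$ and $u^{(2)}(\bfx)=u(x^1)u(x^2)$). No cross term ever appears. With this correction your argument goes through.

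Two minor points: the derivative bounds on $\bar\cP_t^N$ you invoke are Lemma~\ref{lem:derivativeest}, not Theorem~\ref{theo:semigroup} (the latter is the semigroup comparison itself, which \emph{uses} those bounds); and the averaging step also needs the second-moment sparsity control of Lemma~\ref{lem:Itwo}, not only Lemma~\ref{lem:Ione}, because when $\cL_S$ acts on the corrector $f_t^1$ it produces products $\psi^i\psi^k$.
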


\begin{Note}\label{Note:mainthm}\textup{Some comments on the above statement. 
		\begin{itemize}
			\item The first part of the theorem proves that the macroscopic density $\rho_t$, solution of \eqref{macro}, is a good approximation of the empirical density of the particles as soon as $N$ is large and $N\varepsilon$ is small, but only on a finite time horizon. Strengthening this result to the {\it uniform in time} estimate \eqref{main0_statement} requires the strong confining properties  stated in Hypothesis \ref{H2}. 
			To prove Theorem \ref{thm:main1}, we need to combine averaging ($\varepsilon \to 0$) and many-particles ($N \to \infty$) estimates. In both \eqref{main1_statement} and \eqref{main0_statement}, the first error term on the RHS comes from the many-particles estimate, and the second from the averaging procedure. In the uniform in time estimate \eqref{main0_statement},
			the requirements $\kappa_3 \geq \kappa_{\rm av}$ and $\kappa_3 \geq \kappa_{\rm mf}$ come respectively from the averaging and many-particles steps.
			One of the main difficulties in the proof is to track precisely the $N$-dependence of the averaging estimates. As already emphasized in the introduction, the $\ep$ and $N$ limits are taken together. We further explain why at the end of Section \ref{sec:main}.
			\item To further elaborate on the significance of the uniform in time result, let us point out that 
			a long-standing criticism of averaging techniques is the following: while one can typically only prove that the averaged dynamics  is a good approximation of the original slow-fast system for {\em finite-time} windows, the averaged dynamics is often used in practice to make predictions about the {\em long-time} behaviour of the slow-fast system. The fact that the averaged dynamics is a good approximation of the slow-fast system also for long times is typically only conjectured (in specific cases) on the basis of numerical evidence. Part b of Theorem \ref{thm:main1} allows to avoid this pitfall.
			\item As already mentioned in the introduction, from a technical point of view the idea used to obtain uniform in time  estimates is borrowed from \cite{CDO} and it requires proving exponentially fast decay (in time) for the space-derivatives of the Markov Semigroup associated to \eqref{slow2}, i.e. estimates of the type \eqref{eq:expdecayofderivative}.   Such estimates, which are related to those   studied in \cite{CrisanOttobre, CCDO1, Dragoni}, are different from the ones usually appearing in the literature in two respects: first, similarly to  \cite{CrisanOttobre, CCDO1}, these are  not smoothing-type  estimates; second, and further to \cite{CrisanOttobre, CCDO1, Dragoni}, the   dimension-dependence (i.e. dependence in $N$)  of the constants appearing in them is explicit.    To further explain the former point, note that in contrast to \eqref{eq:expdecayofderivative}, smoothing estimates for a given Markov Semigroup \footnote{The precise definition of Markov Semigroup is given in Section \ref{sec:2} below.} $\cP_t$ are, generally, of the form
			$$
			\lv \mathfrak{D}\cP_t f(x) \rv \leq \frac{1}{t^{\gamma}} \|f\|_{\infty}, \quad f \in C_b,
			$$
			where $\mathfrak{D}$ is some appropriate differential operator and $\gamma>0$ depends on $\mathfrak{D}$ (for example, if the semigroup is elliptic one can take $\mathfrak{D}$ to be the usual gradient and $\gamma=1/2$, if the process is hypoelliptic $\gamma$ will depend on the number of commutators needed to obtain the direction $\mathfrak{D}$), see \cite{Bakry} and references therein for a comprehensive review. Note that in the above $f \in C_b$ while in \eqref{eq:expdecayofderivative} we take $f \in C_b^2$; smoothing estimates can be seen as quantifying the ``explosion" of heat-type semigroups as $t\rightarrow 0$ so, in a way, they are more meaningful for $t$ small. Here we want a specific (i.e. exponential) quantitative estimate for $t$ large. Note that by the semigroup property the short and long-time estimates could be ``glued" together, and this is routinely done in the literature; we do not  do it here as the smoothing effect is not the main concern of the paper, what we are interested in is the long-time regime. Moreover,  this would imply having to trace the $N$ dependence in the short-time estimate as well, which is not central to the scope of this paper.  
			\item As a bi-product of the above observation, because in \eqref{main1_statement} and \eqref{main0_statement} we take  $f \in C_b^2$ rather than $f \in C_b$, from our result we cannot deduce weak convergence  in $C([0,T]; \mathbb Pr)$, where $\mathbb Pr$ is the space of probability measures; hence,  in the statement of the theorem, by {\em convergence} we mean convergence in the sense of \eqref{main1_statement} (and \eqref{main0_statement}). 
			\item The theorem is meaningful in the regime $N$ large, $\ep$ small and $N\ep$ small. This may be expected from \eqref{Nscaling}: the rate of the Poisson process creating the links is of order $(N\ep)^{-1}$; good averaging estimates then require $N\ep \ll 1$.  
		\end{itemize}
	}
\end{Note}

\subsection{Setting, Notation and Assumptions}\label{sec:2}

{Because our process \eqref{slow}-\eqref{fast} evolves in {$\R^{Nn} \times \ta$}, we will deal with vectors $\bfx \in \R^{Nn}$ -- the boldfont letter is reserved for elements of $\R^{Nn}$. In particular, $\bfx=(x^1\dd x^N)$, where $x^j \in \R^n$ for every $j \in 1\dd N$. Elements  $\bfa \in \ta$ will be regarded as arrays  with components $\bfa=(a_{12}, a_{13}, \dots, a_{NN})$.  As customary, $C_b^k(\R^d)$ will denote the set of real-valued  functions which are $k$-times continuously differentiable, bounded and  with derivatives up to order $k$ all bounded. Regarding initial data, we make the following standing assumption.  
	\begin{enumerate}[label=\textbf{\textup{[A.\arabic*]}}]
		\item \label{H3new}  We  assume  that the initial datum of \eqref{macro}, i.e. $\pix (x)$, is a smooth probability density  on $\R^n$ with finite moments of order two. The initial data for \eqref{slow}-\eqref{fast} are as follows:     $\{X_0^{i,N, \ep}\}_{1\leq i \leq N}$,  are i.i.d. random variables, $X_0^{i,N, \ep} \sim \pix$ for every $1\leq i \leq N$;   the initial distribution of the links is i.i.d.~as well,  $A_{i,j}^{N, \ep}(0)\sim \pia$ for every $1\leq i<j \leq N$, where $\pia$ is any probability distribution on $\{0,1\}$. The initial distribution of the particles,  $\pix$, is independent of the initial distribution of the links, $\pia$, and it is also independent of every other source of 
		randomness. 
	\end{enumerate}
	
	The dynamics \eqref{slow}-\eqref{fast} contains four sources of randomness: the Brownian motions $B(t)=(B^1(t), \dots, B^N(t))$, the Poisson processes driving formation and destruction of links, the initial distribution of the particles $\pix$,  and the initial distribution of the links $\pia$. In the proofs below, we will first consider a fixed initial datum $X_0^{N, \ep}=\bfx, A^{N, \ep}(0)=\bfa$ and then, at the latest possible moment, we  take expectation with respect to the noise in the initial data,  i.e. with respect to $\bfx=(x^1, \dd x^N)$ and $\bfa=(a_{11}, a_{12} \dd a_{NN})$ with  $\bfx \sim \Pi_{i=1}^N \pix$ and $\bfa \sim \Pi_{i,j=1}^N \pia$.    Moreover, 
	\begin{itemize}
		\item $\E_{\pi}$ will denote expectation with respect to both $\bfx\sim \Pi_{i=1}^N \pix$ and $\bfa \sim \Pi_{i,j=1}^N \pia$;
		\item $\E_{B}$ will denote expectation with respect to all the sources of noise except those in the initial data (of both the particles and the links); 
		\item $\E$ is expectation with respect to all the sources of noise.
	\end{itemize}
}
With this notation in mind, let $\cPbar_t: C_b^2(\R^{Nn}) \rightarrow C_b^2(\R^{Nn})$ {\footnote{The semigroup is usually defined on the set of continuous and bounded functions. As mentioned in Note \ref{Note:mainthm}, here we are not interested in smoothing results so, to avoid further technical complications, we just work with $C_b^2$ functions.}} be the  semigroup associated with the SDE \eqref{slow2}, namely
$$
(\cPbar_t f)(\bfx):=\mathbb{E}_B[f(\Xbar_t)|\, \Xbar_0=\bfx]\,;
$$
where there is no risk of confusion,  we will use interchangeably the notation $\cPbar_t f(\bfx)=(\cPbar_t f)(\bfx)$. 
It is well-known that such a semigroup is a classical solution to the PDE
\begin{equation}\label{eq:barPDE}
\begin{cases}\partial_t(\cPbar_t f)(\bfx)=\bar{\cL}^N\,  \cPbar_t f(\bfx)\\
(\cPbar_0 f)(\bfx)=f(\bfx)
\end{cases}
\end{equation}
where $\bar \cL^N$ is the generator of the diffusion $\bar X^N_t$ , i.e. the second order differential operator formally acting on smooth functions as 
\begin{equation}\label{10star}
(\bar\cL^N f)(\bfx) := \sum_{i=1}^N\left[\bar{\mathcal{F}}^i(\bfx) \cdot \nabla_i f(\bfx)\right]+ D \,  {\mathrm Tr} (\mathrm{Hess} f)(\bfx)\,,
\end{equation}
where ${\mathrm Tr} (\mathrm{Hess} f)$ is simply the Trace of the $Nn\times Nn$ Hessian matrix of the function $f$ and $\nabla_i$ is the $n$-dimensional gradient with respect to the component $x^i$ of $\bfx$. 
Analogously, the process $(X_t^{N, \ep}, A^{N, \ep}(t))$ solution of  \eqref{slow}-\eqref{fast} generates a semigroup $\cP_t^{N, \ep}: C_b^2(\R^{nN}\times \ta) \rightarrow C_b^2(\R^{nN}\times \ta)$, namely
$$
(\cP_t^{N,\ep}h)(\bfx, \bfa):= \mathbb E_B\left[ h(X_t^{N, \ep}, A^{N, \ep}(t))\vert (X_0^{N, \ep}, A^{N, \ep}(0))=(\bfx, \bfa) \right]\,.
$$
Because we want to compare the dynamics $\Xbar_t$ with the dynamics $X_t^{N, \ep}$ (but $X_t^{N, \ep}$ alone does not generate a semigroup), for technical convenience we will restrict our attention to the case in which the  functions $h$ on which $\cP_t^{N, \ep}$ acts are constant in the variable $\bfa$, i.e. to the case in which $h$ does not depend on $\bfa$ and for such functions, coherently with the above, we use the letter $f$. Note that while $f$ acts only on the variable $\bfx$, the function $(\cP_t^{N, \ep} f)(\bfx,\bfa)$ depends on both variables. In other words, we will restrict to considering initial value problems where the initial profile is a function independent of $\bfa$: 
\begin{equation}\label{eq:epsPDE}
\begin{cases}\partial_t(\cP_t^{N, \varepsilon} f)(\bfx,\bfa)=\cL^{N, \ep} (\cP_t^{N, \varepsilon} f)(\bfx,\bfa)\\
(\cP_0^{N,\varepsilon} f){(\bfx,\bfa)}=f(\bfx) \,.
\end{cases}
\end{equation}
The generator $\cL^{N, \ep}$ is the operator (formally) defined on sufficiently smooth functions $h:\R^{nN} \times \ta \rightarrow \R$ as 
\eq{\label{gen:Lep}
	\cL^{N,\varepsilon} h:= \cL_S h +\frac{1}{\varepsilon} \cL_F h,
}
where $\cLs$ and $\cLf$ are the parts of the generator corresponding to the slow and to the fast dynamics, respectively (both $\cLs$ and $\cLf$ should be denoted by $\cL_S^{N, \ep}$ and $\cL_F^{N, \ep}$, we don't do this to avoid cumbersome formulas later on); namely, 
\eq{\label{def:Ls}
	(\cLs h)(\bfx,\bfa)\!&:=\!\sum_{i=1}^N\!\left[\!\lr{\!-\nabla V(x^i)+\sum_{j\neq i}
		a_{ij} K(x^j\!-\!x^i)\!} \! \cdot\! \nabla_{i}h(\bfx,\bfa)\!+\!D \, \mathrm{Tr}(\mathrm{Hess}_{\bfx}\,h(\bfx,\bfa))\right]\\
	&:=\sum_{i=1}^N \left[\dr^i(\bfx,\bfa)\cdot \nabla_{i}h(\bfx,\bfa)+ D
	\mathrm{Tr}(\mathrm{Hess}_{\bfx}\, h(\bfx,\bfa))\right]
}
having set
$$
\dr^i(\bfx,\bfa):= -\nabla V(x^i)+\sum_{j\neq i}
a_{ij} K(x^j-x^i) \, ,
$$
and
\begin{align}\label{def:Lf}
(\cLf h)(\bfx,\bfa)= h_d(\bfx,\bfa)   +  \, h_f(\bfx,\bfa) \, , 
\end{align}
where
\begin{align}
h_d(\bfx,\bfa) &:= \nu_d \sum_{(ij):i<j} h_d^{(ij)}(\bfx,\bfa), \\
\qquad h_f(\bfx,\bfa) &:= \nu_f^N \sum_{(ij):i<j} \varphi_R (|x^i-x^j|)h_f^{(ij)}(\bfx,\bfa), \\
h_d^{(ij)}(\bfx,\bfa)&:=\left[ h(\bfx, a_{12}, \dots,\underbrace{ 0}_{ij}, \dots, a_{NN}) - h(\bfx,\bfa) \right], \label{l1}\\ 
h_f^{(ij)}(x,\bfa) &:=  
\left[ h(\bfx, a_{12}, \dots, \underbrace{1}_{ij}, \dots, a_{NN}) - h(\bfx,\bfa) \right].\label{l2}
\end{align}
Note that the operator $\cLs$ acts on the variable $\bfx$ only, while the operator $\cLf$ acts on the variable $\bfa$ only.
In each addend of line \eqref{l1} (line \eqref{l2}, respectively), the component $a_{ij}$ of $\bfa$ has been replaced by 0 (1, respectively) and  we have used interchangeably the notation $h(\bfx,\bfa)$ and $h(\bfx,a_{12}, \dots, a_{NN})$. It remains understood that $a_{ij}=a_{ji}$ so in \eqref{l1} and \eqref{l2}, we effectively set to zero (or to one)  the value of two components.

Let us now move on to stating the following assumptions for the kernel $K$, the confinement potential $V$ { and the initial data}.   
\begin{enumerate}[label=\textbf{\textup{[H.\arabic*]}}]
	\item \label{H1} The interaction force
	$K : \R^n \rightarrow \R^n$ and the external potential $V : \R^n \rightarrow \R$ are such that
	\begin{enumerate}[label=\textbf{\textup{[H1.\arabic*]}}]
		\item \label{H1K} $K \in C^2(\R^n)$ and its first and second derivatives are bounded.
		\item \label{H1V} $V \in  C^3(\R^n)$ and its second and third derivatives are bounded.
	\end{enumerate}
	
	\item \label{H2}
	$V$ is strongly convex:
	\[
	{\rm Hess} V(x) \geq \kappa_3 I_n,
	\]
	where $\kappa_3>0$ will be chosen to be  sufficiently large (see  Theorem \ref{thm:main1}), $\rm{Hess} V$ is the Hessian matrix of $V$, and the inequality is in the sense of quadratic forms.
	\item \label{H3} The initial data $X_0^{N, \ep} =\bfx \in \R^{nN}, A^N(0)=\bfa \in \ta$ of system \eqref{slow}-\eqref{fast} are as follows:
	\begin{itemize}
		\item If the interaction force $K$ satisfies \ref{H1} and is unbounded then we assume that  there exists  a function $g\in C^2(\R^n)$, $g(x) \!\geq \max\{1,\lvert K(x)\rvert, \lvert x\rvert\}$, and  with first and second bounded derivatives such that, after setting  
		\eq{\label{def:psii}
			\psi^i(\bfx,\bfa) &:=\sum_{j:j\neq i}a_{ij} g(x^j-x^i)}
		we assume
		\eq{\label{Ionefinite}
			\E_{\pi}\Ione(\bfx,\bfa):=\E_{\pi}\frac{1}{N}\sum_{i=1}^N\psi^i(\bfx,\bfa)<C, 
		}
		and
		\eq{\label{Itwofinite}
			\E_{\pi}\Itwo(\bfx,\bfa):=\frac{1}{N}\E_{\pi}\sum_{i=1}^N \left[\psi^i(\bfx,\bfa)\right]^2< C ,}	
		where $C>0$ is a constant independent of $N$ (and the notation $\E_{\pi}$ has been introduced after Assumption \ref{H3new}). 
		\item { if $K$ satisfies \ref{H1} and is bounded then we assume that \eqref{Ionefinite} and \eqref{Itwofinite} are satisfied with  $g\equiv 1$. }
	\end{itemize}
	
\end{enumerate}

\begin{Note}\label{note:hypothesis}\textup{Hypothesis \ref{H1} is (more than) sufficient to ensure that system \eqref{slow} -\eqref{fast} has a strong solution. 
		Hypothesis \ref{H2} will be invoked to obtain uniform in time results. Hypothesis \ref{H3} and \ref{H3new} are assumptions on the initial conditions.  Note that if $g \equiv 1$ then \eqref{def:psii} boils down to assuming that at time $t=0$ each particle has a number of links which is on average of order one,  i.e.
		$$\frac1N \E_{\pi}\sum_{i=1}^N\sum_{j:j\neq i}A_{ij}(0)\leq C,$$
		where $C>0$ is a constant independent of $N$. In general one should think of $g(x)$ as being {$g(x)={\rm max}(1,|x|)$. In this case \eqref{Ionefinite} implies both that the average number of links per particle is of order $1$, and that there are not too many links between ``far away" particles.} {If \eqref{Ionefinite} is needed to control the growth in $N$ of the expected value of $\psi^i$, assumption \eqref{Itwofinite} is needed to control the growth in $N$ of the  second moment of $\psi^i$. }\\
		When $K$ is unbounded,  a crucial step in the  proof of Theorem \ref{theo:semigroup} consists of controlling  the probability of two far away particles being connected by a link. Lemma \ref{lem:Ione}  and Lemma \ref{lem:Itwo}  address this issue. 
		See also Note \ref{note:note45} and the comments after the statement of Lemma \ref{lem:Ione} on the significance of {assumption \ref{H3} }.}
\end{Note}

\section{Proof of Theorem \ref{thm:main1}}\label{sec:main}
Let us  consider the LHS of \eqref{main1_statement}, namely
\eq{
	E &:= \E \left( \int_{\R^n} u(y) \,  d \hat{\mu}_{X^{N,\varepsilon}_t}(y) \, - \, \int_{\R^n} u(y) \rho_t(y) dy  \right)^2 \\
	& = 
	\frac{1}{N^2} \E \sum_{i,j} u(X^{i,N, \varepsilon}_t) u(X^{j,N, \varepsilon}_t)  +\left(\int_{\R^n} u(y) \rho_t(y) dy\right)^2\\
	&-\frac{2}{N}\left(\int_{\R^n} u(y) \rho_t(y) dy\right)  \E \sum_i u(X^{i,N, \varepsilon}_t) \,.
}
We now split the above expression introducing terms which contain the dynamics of the particles which evolve according to the averaged dynamics \eqref{slow2}:
\[
E =E^{\rm av} +{E^{\rm part}}
\]
with
\eq{\label{Eav}
	E^{\rm av} & :=\frac{1}{N^2} \E\sum_{i,j} \left( u(X^{i,N,\ep}_t) u(X^{j,N,\ep}_t)-u(\bar{X}^{i,N}_t)u(\bar{X}^{j,N}_t)\right)\\
	&-\frac{2}{N}\left(\int u(y) \rho_t(y) dy\right) \E\left( \sum_i^N u(X^{i,N,\ep}_t) 
	- u(\bar{X}^{i,N}_t)\right)
}
and
\eq{\label{Ebar}
	{E^{\rm part}} & := \frac{1}{N^2} \E\sum_{i,j}^N u(\bar{X}^{i,N}_t)u(\bar{X}^{j,N}_t) 
	-\frac{2}{N}\E\left(\int u(y) \rho_t(y) dy\right) \E \sum_i^N u(\bar{X}^{i,N}_t) \\ 
	& +\left(\int u(y) \rho_t(y) dy\right)^2\\
	&= \E\lr{\frac1N\sum_{i}u(\bar X^{i,N}_t)-\int u(y) \rho_t(y) dy}^2,
}
where $\bar{X}^{i,N}_t$ satisfies \eqref{slow2}.
To obtain \eqref{main1_statement}, we need to show
\begin{equation}\label{EEavT}
\sup_{t\in[0,T]}\lv E^{\rm av} \rv    \leq  C(1+T)N\varepsilon.    
\end{equation}
and 
\begin{equation}\label{EEpart}
\sup_{t\in[0,T]}\lv E^{\rm part} \rv    \leq \frac{C}{N}e^{CT}.    
\end{equation}
To obtain \eqref{main0_statement} we need to prove, under the stated additional assumptions: 
{ 
	\begin{equation}\label{EEavunif}
	\sup_{t\geq 0}\lv E^{\rm av} \rv    \leq CN\ep    
	\end{equation}
	and 
	\begin{equation}\label{EEpartunif}
	\sup_{t\geq 0}\lv E^{\rm part} \rv    \leq \frac{C}{N}.    
	\end{equation}
}
The averaging estimates \eqref{EEavT} and \eqref{EEavunif}  are proven in Section \ref{Sec:Av};  the estimates \eqref{EEpart} and \eqref{EEpartunif} are studied in  Section \ref{Sec:MF}.  Once such estimates are shown, the proof is concluded. 

Note that, once the above estimates have been proved, the $\ep\rightarrow 0$ and $N\rightarrow \infty$ limits are taken {\em together}. This raises a natural question about the commutativity of these two limits. At least formally,  the limit   $\ep \rightarrow 0 $ first and $N \rightarrow \infty$ could be considered and would produce the non-linear Fokker-Planck equation \eqref{macro}. It is much less clear how, even formally,  one would let $N \rightarrow \infty$ first in this sparse graph regime (and indeed, as mentioned in the introduction, the works \cite{Degond16,BDZ,BCDPZ,Taylor17}, which contain a formal derivation by taking the $N$ limit first, do need to make propagation-of-chaos-type ansatzes to carry out the procedure). Note indeed that, once the $\ep$ limit is taken, i.e. once the dynamics \eqref{slow2} is obtained, our choice of scaling of $\nu_f^N$ \eqref{Nscaling} makes \eqref{slow2} treatable through mean-field arguments, as anticipated in \cite{BDPZ}.  Either way, unless the two limits are taken together, the analysis can never be revealing of the correct critical regime $\ep N\ll 1$ to be considered.

\section{The Averaging limit: proof of \eqref{EEavT} and \eqref{EEavunif}}\label{Sec:Av}
Using the exchangeability of the particles we can rewrite \eqref{Eav} as
\eq{\label{eq:3.1}
	E^{\rm av}
	& =
	\frac{1}{N}\mathbb E\left[u^2(X^{1,N, \ep}_t) -u^2(\bar{X}_t^{1,N\ep})\right]
	\\
	&+\frac{N-1}{N}
	\left(\mathbb{E}\left[u(X_t^{1,N\ep})u(X_t^{2,N\ep})\right]
	-\mathbb{E}\left[u(\bar{X}_t^{1,N})u(\bar{X}_t^{2,N})\right]\right) \\
	&-2\left(\int u(y) \rho_t(y) dy\right)\mathbb{E}\left[u(X^{1,N,\ep}_t) 
	-u(\bar{X}^{1,N}_t)\right]\\
	& \leq  \frac{C}{N}  +\frac{N-1}{N} \E_{\pi}\lv (\cP^{N,\varepsilon}_t u^{(2)})(\bfx,\bfa) -(\bar{\cP}^N_t u^{(2)})(\bfx,\bfa)\rv \\
	& +C \E_{\pi}\lv(\cP_t^{N,\varepsilon} u^{(1)})(\bfx,\bfa) - (\bar{\cP}^N_t u^{(1)})(\bfx,\bfa)\rv, 
}
where $u^{(k)}:\R^{Nn}\to \R$ is defined as $u^{(k)}(\bfx):=u(x^1)\ldots u(x^k)$;  in the last inequality we have used the boundedness of $u$ and the fact that, by independence of the sources of noise, $\E=\E_{\pi}\E_B$. 

The result now follows once we can control the difference between the semigroups
$\cP^{N,\varepsilon}_t$ and $\bar{\cP}^N_t$. This is the purpose of the following theorem, which is the main result of this section.  Before stating and proving such a theorem let us emphasize again that throughout this section we will first consider a fixed initial datum $X_0^{N, \ep}=\bfx, A^{N, \ep}(0)=\bfa$ and then at the latest possible moment we will take expectation $\E_{\pi}$ with respect to the noise in the initial data.  
\begin{theorem} \label{theo:semigroup}
	With the notation introduced so far, suppose Assumption \ref{H3new}, Hypothesis \ref{H1} and Hypothesis \ref{H3} are satisfied. 
	Then there exists some $\varepsilon_0$ such that for any {$t\in[0,\infty)$}, and any $N,\varepsilon$ with $N\varepsilon<\varepsilon_0$ we have 	
	\eqh{
		\E_{\pi}\lvert \cP_t^{N,\varepsilon} f(\bfx, \bfa)\!-\!\cPbar_tf(\bfx)\rvert
		\!\leq\! \varepsilon NC (1+t) \E_{\pi}\!\left(C_1 + C_3 \Ione(\bfx, \bfa)e^{\frac{-C_2t}{\varepsilon}}+ \Itwo(\bfx, \bfa)e^{\frac{-C_4t}{\varepsilon}} \right)
	}
	for every $f\in C_b^2(\R^{nN})$ (where {the constant $C$ does not depend on $t$}, $N$ or $\ep$ but it will depend on a suitable norm of $f$, see \eqref{defnorm}).  
	Moreover, if Hypothesis \ref{H2} holds for some $\kappa_3\geq \kappa_{av}$ (the latter being defined in Lemma \ref{lem:derivativeest}),  then the above estimate is uniform in time, i.e.
	\eqh{
		\E_{\pi}\lvert \cP_t^{N,\varepsilon} f(\bfx, \bfa)-\cPbar_tf(\bfx)\rvert
		\leq \varepsilon NC \E_{\pi} \left(C_1 + C_3\Ione(\bfx, \bfa)e^{-C_2\frac{t}{\varepsilon}}+ \Itwo(\bfx, \bfa)e^{-C_4\frac{t}{\varepsilon}} \right)\,.
	}
\end{theorem}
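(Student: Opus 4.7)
My strategy is the classical corrector-plus-Duhamel method for stochastic averaging, adapted to track the $N$-dependence explicitly. The crucial structural observation is that $\cLf$ decouples across pairs: since each $a_{ij}$ evolves as an independent two-state Markov chain conditional on $\bfx$, the Poisson equation for $\cLf$ reduces to $\binom{N}{2}$ scalar equations. I would therefore define the corrector $\chi_s(\bfx,\bfa)=\sum_{i<j}\chi_s^{(ij)}(\bfx,a_{ij})$ as the unique $\mu_{\bfx}$-mean-zero solution of
\[
\cLf \chi_s(\bfx,\bfa)=(\bar\cL^N-\cLs)\cPbar_s f(\bfx)=\sum_{i<j}\bigl[\mu_{\bfx}^{(ij)}(1)-a_{ij}\bigr]H_{ij}(\bfx),
\]
where $H_{ij}:=K(x^j-x^i)\cdot\nabla_i\cPbar_s f+K(x^i-x^j)\cdot\nabla_j\cPbar_s f$. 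Each $\chi_s^{(ij)}$ is then available in closed form in terms of the jump rates $\alpha=\nu_f^N\varphi_R(|x^i-x^j|)$ and $\beta=\nu_d$, which yields the pointwise bound $|\chi_s^{(ij)}|\leq |H_{ij}|(a_{ij}+\alpha/\beta)/\nu_d$ (the first summand corresponds to an active link, the second to the equilibrium probability of creating a link).

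Setting $\Phi_s:=\cPbar_s f+\ep\chi_s$, a direct check (using $\cLf\cPbar_s f=0$ together with the defining Poisson equation) gives $\partial_s\Phi_s-\cL^{N,\ep}\Phi_s=\ep(\partial_s-\cLs)\chi_s$, whence Duhamel's formula delivers
\[
\cP_t^{N,\ep}f(\bfx,\bfa)-\cPbar_t f(\bfx)=\ep\,\chi_t(\bfx,\bfa)-\ep\,\cP_t^{N,\ep}\chi_0(\bfx,\bfa)-\ep\int_0^t \cP_{t-s}^{N,\ep}\bigl[(\partial_s-\cLs)\chi_s\bigr](\bfx,\bfa)\,ds.
\]
Since $\cP_t^{N,\ep}$ contracts on bounded functions, the problem reduces to pointwise control of $\chi_s$, $\partial_s\chi_s$, and $\cLs\chi_s$. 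Summing the pointwise bound on $\chi_s^{(ij)}$, using $|K|\leq g$ from Hypothesis \ref{H3}, the scaling $\nu_f^N=\nu_f/N$, and the compact support of $\varphi_R$, I expect an estimate of the form $|\chi_s(\bfx,\bfa)|\leq CN\lVert\nabla\cPbar_s f\rVert_\infty(1+\Ione(\bfx,\bfa))$: the $\Ione$ term is the contribution of active links and the $O(1)$ term is the compensator coming from $\varphi_R$. An analogous but more involved analysis of $\cLs\chi_s$ (which brings in spatial derivatives of the $\chi^{(ij)}$ through $\nabla_i H_{ij}$ and produces second derivatives of $\cPbar_s f$) leads to double sums of the form $\sum_i\psi^i\sum_l a_{il}g(x^l-x^i)=\sum_i(\psi^i)^2=N\Itwo$, accounting for the $\Itwo$ contribution to the final bound; $\partial_s\chi_s$ is handled similarly by noting that $\partial_s\cPbar_s f=\bar\cL^N\cPbar_s f$ brings in one additional derivative.

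The main obstacle is controlling the space-derivatives $\nabla\cPbar_s f$ and $\mathrm{Hess}\,\cPbar_s f$ with \emph{$N$-independent} constants, and this is precisely what Lemma \ref{lem:derivativeest} is designed to provide. Under Hypothesis \ref{H1} alone it gives $\lVert\nabla\cPbar_s f\rVert_\infty\leq C(1+s)$, producing the $(1+t)$ prefactor in the non-uniform bound; under the strong-convexity Hypothesis \ref{H2} with $\kappa_3\geq\kappa_{\rm av}$ it gives $\lVert\nabla\cPbar_s f\rVert_\infty\leq Ce^{-\kappa_{\rm av}s}$, which renders the Duhamel integral uniformly bounded in $t$ and hence upgrades the estimate to the uniform-in-time statement. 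The exponential factors $e^{-C_2 t/\ep}$ and $e^{-C_4 t/\ep}$ multiplying $\Ione$ and $\Itwo$ arise when $\cP_{t-s}^{N,\ep}$ acts on these initial-data link sums inside the Duhamel integral: the destruction rate $\nu_d/\ep$ forces any initial link configuration to relax on the fast time scale $\ep/\nu_d$ to its (sparse) stationary state, and this is exactly the content of Lemma \ref{lem:Ione} and Lemma \ref{lem:Itwo}. Taking $\E_\pi$ at the very last step then assembles the three contributions into the announced bound.
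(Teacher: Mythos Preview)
Your approach is essentially the paper's own: define a corrector solving the Poisson equation $\cLf\chi_s=(\bar\cL^N-\cLs)\cPbar_s f$, set up the Duhamel representation, and bound the three resulting pieces via Lemma~\ref{lem:Ione}, Lemma~\ref{lem:Itwo}, and Lemma~\ref{lem:derivativeest}. The only cosmetic difference is that you take the mean-zero corrector $\chi_s^{(ij)}\propto (a_{ij}-\mu_{\bfx}^{(ij)}(1))H_{ij}$, whereas the paper works with the simpler (non-centered) choice $f_t^1=\sum_i\sum_{j\neq i}\tilde K(x^j-x^i)a_{ij}\partial_i\cPbar_t f$; both solve the same equation and lead to the same estimates.

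One correction: you attribute the $(1+t)$ prefactor to a bound of the form $\lVert\nabla\cPbar_s f\rVert_\infty\leq C(1+s)$ under Hypothesis~\ref{H1} alone. That is not what Lemma~\ref{lem:derivativeest}(i) says; it gives a \emph{uniform}-in-$s$ bound $[[\cPbar_s f]]_{C_b^2}\leq C\lVert f\rVert_{C_b^2}$. The $(1+t)$ appears simply because the Duhamel integrand is then bounded by a constant and $\int_0^t C\,ds=Ct$, together with the $O(1)$ boundary terms. Under Hypothesis~\ref{H2} the integrand gains the factor $e^{-\delta s}$, and integrating gives the uniform-in-time version. A related caveat: when you say ``$\cP_t^{N,\ep}$ contracts on bounded functions, the problem reduces to pointwise control'', bear in mind that $\chi_s$ and $\cLs\chi_s$ are \emph{not} bounded when $K$ is unbounded (they are controlled by $\Ione,\Itwo$); the role of Lemmas~\ref{lem:Ione}--\ref{lem:Itwo} is precisely to propagate these Lyapunov-type quantities under $\cP_{t-s}^{N,\ep}$, producing the $e^{-C_2 t/\ep}$ and $e^{-C_4 t/\ep}$ factors you mention.
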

Before proving Theorem \ref{theo:semigroup} we state three lemmata, which are key to proving Theorem \ref{theo:semigroup}. We will prove such lemmata after the proof of Theorem \ref{theo:semigroup}. 
The first two lemmata, Lemma \ref{lem:Ione} and Lemma \ref{lem:Itwo}, let us express more precisely the idea that the average number of links per particle remains of order $1$, if such a property is true at time zero. Lemma \ref{lem:derivativeest} contains bounds on the derivatives of the semigroup $\cPbar_t$, which are key to obtaining uniform in time results.

\begin{lemma}\label{lem:Ione}
	With the notation introduced so far, suppose Hypotheses \ref{H1} and \ref{H3} hold.
	If $N$ and $\ep$ are such that the product $N\varepsilon$ is sufficiently small, {then for any $t\in[0,\infty)$} there exist positive constants $C_1,C_2$ {independent of $t$} such that
	\begin{align*}
	(\cP_t^{N,\ep} \Ione\!) (\bfx,\bfa)\!= \!\mathbb{E}_B\!\!\left[\Ione\!(X_t^{N, \ep},\!A^{N, \ep}(t)) \vert (X_0^{N, \ep},A^{N, \ep}(0))\!=\!(\bfx, \bfa)\right] \!\!\leq\! C_1\!+\!\Ione\!(\bfx,\bfa)e^{-\frac{C_2t}{\varepsilon}}\!.
	\end{align*}
\end{lemma}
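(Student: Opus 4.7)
The plan is to prove the estimate by a Dynkin--Gronwall argument on $U(t) := (\cP_t^{N,\varepsilon} \Ione)(\bfx, \bfa)$, which satisfies $U'(t) = \cP_t^{N,\varepsilon}[\cL^{N,\varepsilon} \Ione](\bfx, \bfa)$ with $\cL^{N,\varepsilon} = \cLs + \varepsilon^{-1} \cLf$ as in \eqref{gen:Lep}. The goal is a differential inequality of the shape $U'(t) \leq -\frac{C_2}{\varepsilon} U(t) + \frac{C_1'}{\varepsilon} + (\text{l.o.t.})$, from which a standard Gronwall yields the exponential-plus-constant bound. The fast part is essentially immediate: a direct computation exploiting $a_{ij} \in \{0,1\}$ and the jump structure \eqref{def:Lf} gives
\[
\cLf \Ione = -\nu_d\,\Ione + \frac{\nu_f^N}{N} \sum_{i\neq j} g(x^j-x^i)\,\varphi_R(|x^i-x^j|)(1-a_{ij}) \;\leq\; -\nu_d\,\Ione + C_0,
\]
where $C_0$ is a constant independent of $(\bfx,\bfa)$ and of $N$: the compact support of $\varphi_R$ makes $g\,\varphi_R$ bounded on $\R^n$, and the scaling $\nu_f^N = \nu_f/N$ exactly compensates the $O(N^2)$ count of pairs.

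The slow part requires more care. Using boundedness of $\nabla g$ and $\mathrm{Hess}\,g$ (from Hypothesis \ref{H3}), the domination $|K(y)| \leq g(y)$, the at-most-linear growth of $\nabla V$ (inherited from the Hessian bound in \ref{H1V}), and the elementary inequality $\sum_\ell a_{i\ell} \leq \psi^i$ (valid since $g \geq 1$), I would establish a bound of the form
\[
|\cLs \Ione(\bfx,\bfa)| \;\leq\; C\Big(1 + \Ione(\bfx,\bfa) + \Itwo(\bfx,\bfa) + \tfrac{1}{N}{\textstyle\sum_i}|x^i|^2\Big),
\]
where the troublesome contribution $-\nabla V(x^i)\cdot \nabla_i\Ione$ is handled via the AM--GM splitting $|x^i|\psi^i \leq \tfrac{1}{2}(|x^i|^2 + (\psi^i)^2)$. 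Writing $\mathcal{X}(\bfx) := N^{-1}\sum_i |x^i|^2$ and applying $\cP_t^{N,\varepsilon}$ to the combined bound, one gets
\[
U'(t) \;\leq\; \Big(C - \frac{\nu_d}{\varepsilon}\Big)U(t) + \frac{C_0}{\varepsilon} + C\,(\cP_t^{N,\varepsilon}\Itwo)(\bfx,\bfa) + C\,(\cP_t^{N,\varepsilon}\mathcal{X})(\bfx,\bfa).
\]
For $N\varepsilon$ sufficiently small, the fast dissipation $-\nu_d/\varepsilon$ dominates the constant $C$; the remaining transport terms on the right are controlled respectively by the companion Lemma \ref{lem:Itwo} (proved in parallel with the same strategy) and by a standard It\^o estimate for $|X^i|^2$ applied to \eqref{slow}. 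Integrating the resulting closed inequality gives the claim.

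The main obstacle is precisely that $\cLs \Ione$ cannot be bounded pointwise uniformly in $\bfx$: since $\nabla V$ has at-most-linear growth (no confinement being assumed in this lemma), the bound necessarily introduces a moment term $\mathcal{X}$, forcing the argument to be coupled with auxiliary moment control on the particles and with the twin bound for $\Itwo$. Keeping the $N$-dependence of every constant explicit, and verifying that in the regime $N\varepsilon \ll 1$ all slow contributions remain subdominant to the $\nu_d/\varepsilon$ fast dissipation, is the delicate piece of bookkeeping that yields constants $C_1, C_2$ independent of $t$, $N$ and $\varepsilon$.
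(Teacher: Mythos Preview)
Your overall architecture---compute $\cL^{N,\varepsilon}\Ione$, show the fast part gives $-\nu_d\Ione + C_0$, show the slow part is subdominant when $N\varepsilon\ll 1$, then Gronwall---is exactly the paper's. But you miss a structural observation that makes the slow estimate close \emph{on its own}, and as a result you introduce two auxiliary quantities that cause real trouble.

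The point you overlook is that $\Ione$ is built from $g(x^j-x^i)$, a function of the \emph{difference} $x^j-x^i$. When $\cLs$ hits $g(x^j-x^i)$, only the $i$th and $j$th drift terms act, and the confinement contribution is
\[
-\nabla V(x^j)\cdot\nabla g(x^j-x^i)\;-\;\nabla V(x^i)\cdot\big(-\nabla g(x^j-x^i)\big)
=\big(\nabla V(x^i)-\nabla V(x^j)\big)\cdot\nabla g(x^j-x^i),
\]
which by the Lipschitz bound on $\nabla V$ (from \ref{H1V}) and $|z|\le g(z)$ is controlled by $C\,g(x^j-x^i)$. This is precisely \eqref{eq:LSg} in the paper. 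Multiplying by $a_{ij}$ and summing yields $\cLs\Ione\le CN\,\Ione$ (see \eqref{blabla}), with no moment term and no $\Itwo$. The differential inequality then reads
\[
\partial_t\,\cP_t^{N,\varepsilon}\Ione\le\Big(CN-\frac{\nu_d}{\varepsilon}\Big)\cP_t^{N,\varepsilon}\Ione+\frac{C}{\varepsilon},
\]
and Gronwall closes it directly for $N\varepsilon$ small.

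Your cruder bound $|\nabla V(x^i)|\lesssim 1+|x^i|$ forfeits this cancellation and creates two problems. First, the moment term $\mathcal X=\frac1N\sum_i|x^i|^2$: under \ref{H1} alone (no convexity \ref{H2} is assumed in this lemma) there is no reason for $\cP_t^{N,\varepsilon}\mathcal X$ to stay bounded uniformly in $t$; generically it grows exponentially, and even after integrating against $e^{-\nu_d t/\varepsilon}$ you would get a constant $C_1$ that blows up with $t$. Second, invoking Lemma~\ref{lem:Itwo} is circular as stated: the paper's proof of Lemma~\ref{lem:Itwo} explicitly uses \eqref{eq:pisiineq}, i.e.\ the conclusion of Lemma~\ref{lem:Ione}. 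Both issues disappear once you exploit the difference structure of $g(x^j-x^i)$.
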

According to Lemma \ref{lem:Ione}, the quantity $\mathbb{E}_B[\Ione(X_t^{N, \ep},A^{N, \ep}(t))]$ remains of order one (in $N$) uniformly in time if it is of order one at the initial time. However Lemma \ref{lem:Ione} says also something more: because we are working in the regime $N\ep \ll 1$, so e.g. $\ep= 1/N^{\alpha}$ for some $\alpha>1$, if $\Ione$ is large in $N$ at time zero, then such initially large value quickly decays in expectation, due to the factor  $e^{-C_2t/\varepsilon}$.  Therefore it is reasonable to assume that the average number of links in the initial configuration is $O(1)$ in $N$.
Lemma \ref{lem:Itwo} below strengthens this result by giving some control on the second moment as well. 
\begin{lemma}\label{lem:Itwo}
	With the same assumptions and notation as in Lemma \ref{lem:Ione}, there exists $\varepsilon_0>0$ such that if $N\varepsilon< \varepsilon_0$ then, {for any $t\in[0,\infty)$ and for some positive constants $C_1,C_2,C_3,C_4>0$ independent of $t$}, we have
	\begin{equation*}
	\cP_t^{N,\varepsilon}\Itwo(\bfx,\bfa) \leq C_1 + C_3\Ione(\bfx,\bfa)e^{-C_2\frac{t}{\varepsilon}}+ \Itwo(\bfx,\bfa)e^{-C_4\frac{t}{\varepsilon}}.
	\end{equation*}
	Here $\Ione$ is defined as in Lemma \ref{lem:Ione}.
\end{lemma}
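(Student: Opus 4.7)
The plan is to mimic the strategy that one uses for Lemma \ref{lem:Ione}: apply the full generator $\cL^{N,\varepsilon}=\cLs+\varepsilon^{-1}\cLf$ to $\Itwo$ and derive an ODE-type bound on $\Phi(t):=\cP^{N,\varepsilon}_t\Itwo(\bfx,\bfa)$, then integrate. The key input is that the fast part provides dissipation of the order $-\nu_d/\varepsilon$ on $\Itwo$ (which dominates everything else in the small $N\varepsilon$ regime), while the slow part contributes only lower-order terms that are controlled either by $\Itwo$ itself or by $\Ione$.

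First, I would compute $\cLf[\psi^i]^2$. Writing $[\psi^i]^2_{kl\to 0/1}-[\psi^i]^2 = (\psi^i_{kl\to 0/1}-\psi^i)(\psi^i_{kl\to 0/1}+\psi^i)$ and noting that only flips of edges incident to $i$ contribute, the leading term from the destruction part equals $-2\nu_d(\psi^i)^2$. The "quadratic variation'' remainder from the destruction is $\nu_d\sum_{j\neq i}a_{ij}g(x^j-x^i)^2$, which is bounded by $(\psi^i)^2$ because $a_{ij}g(x^j-x^i)\geq 0$ so $(\sum_j a_{ij}g(x^j-x^i))^2\geq \sum_j a_{ij}^2 g(x^j-x^i)^2$. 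The formation contributions are controlled by the compact support of $\varphi_R$ and the scaling $\nu_f^N=\nu_f/N$: both $\nu_f^N\sum_j\varphi_R(|x^i-x^j|)(1-a_{ij})g(x^j-x^i)^k$ ($k=1,2$) are bounded by an absolute constant. Summing over $i$ and using an AM--GM estimate $2|\psi^i|\leq \delta(\psi^i)^2+\delta^{-1}$ for the cross term, this produces
\[
\cLf \Itwo \leq -\tfrac{\nu_d}{2}\Itwo + C.
\]

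Second, I would estimate $\cLs\Itwo$. By the carré du champ identity, $\cLs[\psi^i]^2 = 2\psi^i\,\cLs\psi^i + 2D\sum_k|\nabla_k\psi^i|^2$. The gradient term is controlled using $|\nabla g|\leq C$ and $g\geq 1$ by $\sum_k|\nabla_k\psi^i|^2\leq C[(\psi^i)^2+\psi^i]$. For $\psi^i\cLs\psi^i$, the cancellation $\sum_k \dr^k\cdot\nabla_k\psi^i = \sum_{j\neq i}a_{ij}(\dr^j-\dr^i)\cdot\nabla g(x^j-x^i)$ together with the Lipschitz estimate $|\nabla V(x^j)-\nabla V(x^i)|\leq C|x^j-x^i|\leq C g(x^j-x^i)$ (Hypothesis \ref{H1V} and the bound $g\geq|\cdot|$ of Hypothesis \ref{H3}) and the bound $|a_{jl}K(x^l-x^j)|\leq a_{jl}g(x^l-x^j)$ leads, after Cauchy--Schwarz and averaging over $i$, to an estimate of the form $\cLs\Itwo \leq C\Itwo + C\Ione\,\Itwo^{1/2}+C$. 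In particular there exist constants $M_1,M_2$ independent of $N$ and $\varepsilon$ such that
\[
\cL^{N,\varepsilon}\Itwo \leq -\tfrac{\nu_d}{2\varepsilon}\Itwo + M_1\Itwo + M_2\Ione + M_2.
\]
For $N\varepsilon$ small enough, the slow-part coefficient $M_1$ is dominated by a fraction of $\nu_d/(2\varepsilon)$, so a net coefficient $-C_4/\varepsilon$ remains in front of $\Itwo$.

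Third, I apply Dynkin's formula: $\Phi'(t)=\cP_t^{N,\varepsilon}(\cL^{N,\varepsilon}\Itwo)(\bfx,\bfa)$. Combining the inequality above with Lemma \ref{lem:Ione} (so that $\cP_t^{N,\varepsilon}\Ione(\bfx,\bfa)\leq C_1'+\Ione(\bfx,\bfa)e^{-C_2 t/\varepsilon}$) gives
\[
\Phi'(t)\leq -\tfrac{C_4}{\varepsilon}\Phi(t) + M_2\bigl(C_1'+\Ione(\bfx,\bfa)e^{-C_2 t/\varepsilon}\bigr)+M_2.
\]
Solving this linear ODE by the integrating-factor / Duhamel formula produces the three summands in the statement: the constant source yields $C_1$, the $\Ione$ source yields $C_3\Ione(\bfx,\bfa)e^{-C_2 t/\varepsilon}$ (possibly with a redefinition of $C_2$ if $C_2\neq C_4$, handled by choosing the smaller rate), and the homogeneous part gives $\Itwo(\bfx,\bfa)e^{-C_4 t/\varepsilon}$.

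The main obstacle I anticipate is step two: tracking carefully the $N$-dependence of $\cLs\Itwo$ when $K$ is unbounded. The drifts $\dr^k$ carry an unbounded force $\sum_l a_{kl}K(x^l-x^k)$, and $\psi^i$ also has unbounded summands, so naive estimates yield quantities like $\sum_{i,j,l}a_{ij}a_{jl}\cdots$ that potentially grow with $N$. The purpose of the growth hypothesis $g\geq\max(1,|K|,|\cdot|)$ in \ref{H3} is precisely to let one replace every factor $|K|$ or $|x^l-x^k|$ by $g$ and thereby express bounds purely in terms of $\psi^i$-type sums, i.e.\ in terms of $\Ione$ and $\Itwo$ themselves (up to a factor $1/N$ absorbed in $\nu_f^N$). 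The constraint $N\varepsilon<\varepsilon_0$ then allows the slow-part positive coefficient to be absorbed into a small fraction of the fast dissipation, making the Gronwall argument close.
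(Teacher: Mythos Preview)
Your overall strategy matches the paper's: compute $\cL_F\Itwo$ and $\cL_S\Itwo$ separately, combine into a differential inequality for $\cP_t^{N,\varepsilon}\Itwo$, feed in Lemma~\ref{lem:Ione} for the $\Ione$ forcing, and integrate. Steps~1 and~3 are fine and essentially coincide with the paper (the paper keeps the $\Ione$ term explicit in $\cL_F\Itwo \leq -\nu_d\Itwo + 2\Ione + C$ rather than absorbing it via AM--GM, but that is cosmetic).

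The gap is in step~2. Your displayed inequality $\cL_S\Itwo \leq M_1\Itwo + M_2\Ione + M_2$ with $M_1, M_2$ \emph{independent of $N$} cannot be obtained. In the carr\'e du champ computation you sketch, the contribution of $-\dr^i$ to $(\dr^j-\dr^i)$ brings in $\sum_{l\neq i}a_{il}|K(x^l-x^i)|\leq \psi^i$; multiplied by $a_{ij}$, summed over $j$, and then multiplied by $\psi^i$, this yields $(\psi^i)^2\sum_{j\neq i}a_{ij}$. Bounding $\sum_{j\neq i}a_{ij}\leq \psi^i$ turns this into $(\psi^i)^3$, and $\frac{1}{N}\sum_i(\psi^i)^3$ is a third-moment quantity that no Cauchy--Schwarz closes in terms of $\Ione$ and $\Itwo$ alone. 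The paper instead takes the crude route $\sum_{j\neq i}a_{ij}\leq N$, arriving at
\[
\cL_S\Itwo(\bfx,\bfa)\ \leq\ CN\,\Itwo(\bfx,\bfa).
\]
This factor of $N$ is exactly why the hypothesis $N\varepsilon<\varepsilon_0$ is essential here: it is what allows $CN$ to be absorbed into a fraction of $\nu_d/\varepsilon$ in the combined generator estimate. Your final paragraph correctly anticipates that the $N$-dependence of $\cL_S\Itwo$ is the crux and that $N\varepsilon<\varepsilon_0$ is the fix, but your displayed inequality for $\cL^{N,\varepsilon}\Itwo$ contradicts this (if $M_1$ were truly $N$-independent, only $\varepsilon$ small would be needed). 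Replace $M_1$ by $CN$ and the Gronwall step closes exactly as you describe, yielding the statement.
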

For $f:\R^{Nn}\to\R$, $f \in C_b^2(\R^{nN})$, we define the norm 
\begin{align}
\lVert f\rVert_{C_b^2(\R^{nN})}^2&:= \lVert f\rVert_\infty^2+ \sum_{i=1}^N \lVert \nabla_i f\rVert_\infty^2 + \sum_{i,j=1}^N \lVert \nabla^2_{ij} f\rVert_\infty^2,\label{defnorm}
\end{align} 
{and seminorm
	\begin{align}    
	[[f]]_{C_b^2(\R^{nN})}^2 &:=\sup_{\bfx \in \R^{nN}}\left(\sum_{i=1}^N\lvert \nabla_{i} f(\bfx) \rvert^2+ \sum_{i,j=1}^N \lvert \nabla^2_{ij} f(\bfx)\rvert^2 \right)\label{defseminorm}
	\end{align}}
where $\nabla_i f$ is the gradient with respect to the variable $x^i\in \R^n$, $x^i=(x^i_{1} \dd x^i_n)$, so that $\|\nabla_i f\|_{\infty}:= \max_{k \in \{1 \dd n\} }
\|(\nabla_i f)_k \|_{\infty} $, where $(\nabla_i f)_k$ is the $k$-th component of the gradient $\nabla_i$ and  
$\|\cdot\|_\infty$ is the supremum norm in $\R^{Nn}$. Analogously, $\nabla_{ij}^2f$ is the $n \times n$ matrix containing all the second derivatives with respect to the variables $x^i, x^j \in \R^n$. 
Let us emphasize that the dependence on $n$ of this quantity is irrelevant to our purposes, as $n$ is fixed; what matters here is the dependence on $N$ so, to simplify notation, the proofs of all the lemmata will be done in the case $n=1$. With this notation in place we can state the next lemma. 
\begin{lemma}\label{lem:derivativeest}
	Suppose  Hypothesis \ref{H1} holds. 
	Recall the function $\Kbar: \R^n \rightarrow \R^n$ is defined in \eqref{eq:Kbardef} and note that, as a consequence of our assumptions on $K$ and $\varphi_R$,  there exist positive constants $\kappa_1,\kappa_2$ such that $\|\nabla\Kbar(z)\|_{\infty}\leq \kappa_1/N$ and $\|\nabla^2\Kbar(z)\|_{\infty}\leq \kappa_2/N$. \footnote{Here $\nabla$ and $\nabla^2$ are gradient and Hessian with respect to $z \in \R^n$.} 
	Let $\kappa_3 \in \R,\kappa_4>0$ be  constants such that  ${\rm Hess}V(z)\geq \kappa_3 Id_n$  and $\|\nabla^3 V(z)\|_{\infty}\leq \kappa_4$ for all $z\in \R^n$.{\footnote{$\nabla^3$ denotes all the third derivatives with respect to $z \in \R^n$.}} 
	Then
	\begin{description}
		\item[i)] There is a constant $C>0$ which depends only on $\kappa_1,\kappa_2,\kappa_3$ and $\kappa_4$ such that for all $N\in \mathbb{N}$, $f\in C_b^2(\R^{nN})$ and $t\geq 0$ we have
		\begin{equation}\label{eq:C2toC2ineq}
		{[[\cPbar_t f]]_{C_b^2(\R^{nN})} \leq C \lVert f\rVert_{C_b^2(\R^{Nn})}.}
		\end{equation}
		
		\item[ii)] Moreover, if \ref{H2} holds with $\kappa_3$ such that, for some $\delta>0$, 
		\begin{equation*}
		{ \kappa_3 \geq \kappa_{\rm av}:=\frac\delta2+2\kappa_1+\frac{\kappa_4+4\kappa_2}{\delta +4\kappa_2+\kappa_4+4\kappa_1}},
		\end{equation*}
		then,    
		\begin{align}\label{eq:expdecayofderivative}
		{ [[\cPbar_t f]]_{C_b^2(\R^{nN})}  \leq Ce^{-\delta t} \|f\|_{C^2_b(\R^{Nn})}^{{2}} }\,
		\end{align}
		for some $C>0$ independent of {$N$ and $t$}.
	\end{description}
\end{lemma}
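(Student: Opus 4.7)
The natural approach is the variational calculus of the stochastic flow associated with \eqref{slow2}. Because the noise in \eqref{slow2} is additive, the flow map $\bfx\mapsto \bar X_t^{\bfx}$ is almost surely $C^2$ and a standard differentiation-under-the-expectation gives
\begin{align*}
\nabla_{x^k}(\cPbar_t f)(\bfx)&=\E\Bigl[\sum_{i=1}^N \nabla_i f(\bar X_t^{\bfx})\,\mathcal{J}_t^{ik}\Bigr],\\
\nabla^2_{x^kx^l}(\cPbar_t f)(\bfx)&=\E\Bigl[\sum_{i,j=1}^N\nabla^2_{ij} f(\bar X_t^{\bfx})\,\mathcal{J}_t^{ik}\otimes \mathcal{J}_t^{jl}+\sum_{i=1}^N \nabla_i f(\bar X_t^{\bfx})\,\mathcal{K}_t^{i,kl}\Bigr],
\end{align*}
where $\mathcal{J}_t^{ik}:=\partial_{x^k}\bar X_t^{i,\bfx}$ and $\mathcal{K}_t^{i,kl}:=\partial_{x^k}\partial_{x^l}\bar X_t^{i,\bfx}$ solve the linear (random-coefficient) variational equations obtained by differentiating \eqref{slow2}:
\begin{align*}
\tfrac{d}{dt}\mathcal{J}_t &= (\nabla\adr)(\bar X_t)\,\mathcal{J}_t,\qquad \mathcal{J}_0=\mathrm{Id},\\
\tfrac{d}{dt}\mathcal{K}_t &= (\nabla\adr)(\bar X_t)\,\mathcal{K}_t + (\nabla^2\adr)(\bar X_t)(\mathcal{J}_t,\mathcal{J}_t),\qquad \mathcal{K}_0=0.
\end{align*}
The task thus reduces to controlling $\|\mathcal{J}_t\|_{\mathrm{op}}$ and $\|\mathcal{K}_t\|$ with $N$-independent constants.

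The heart of the argument is a one-sided bound on the symmetric part of $\nabla\adr$. Splitting the $Nn\times Nn$ block matrix into its diagonal blocks $\nabla_{x^i}\adr^i=-\nabla^2V(x^i)-\sum_{j\neq i}\nabla\Kbar(x^j-x^i)$ and its off-diagonal blocks $\nabla_{x^k}\adr^i=\nabla\Kbar(x^k-x^i)$ for $k\neq i$, and using \ref{H2} together with $\|\nabla\Kbar\|\leq\kappa_1/N$ and the Cauchy--Schwarz inequality $\bigl(\sum_{i=1}^N|v^i|\bigr)^2\leq N\|v\|^2$, I would check directly that, for every $v=(v^1,\dots,v^N)\in\R^{Nn}$,
\begin{equation*}
\langle(\nabla\adr)(\bfx)\,v,\,v\rangle\;\leq\;(2\kappa_1-\kappa_3)\|v\|^2.
\end{equation*}
The $1/N$ scaling of $\Kbar$ stemming from \eqref{Nscaling} is indispensable here: without it, the $N^2$ off-diagonal blocks would produce a constant blowing up in $N$. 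Differentiating $t\mapsto |\mathcal{J}_t w|^2$ and applying Gronwall then yields $\|\mathcal{J}_t\|_{\mathrm{op}}\leq e^{(2\kappa_1-\kappa_3)t}$. An analogous calculation gives $\|\nabla^2\adr\|\leq \kappa_4+\kappa_2$ uniformly in $N$. Plugging these into the equation for $\mathcal{K}_t$ and applying Young's inequality with a parameter tuned so that the linear and the forcing contributions share a common exponential rate, one obtains $\|\mathcal{K}_t\|\leq C e^{-\delta t}$ precisely when $\kappa_3\geq\kappa_{\rm av}$; this is where the quoted formula for $\kappa_{\rm av}$ emerges. Substituting the resulting bounds on $\mathcal{J}_t,\mathcal{K}_t$ into the Feynman--Kac formulas above and taking the supremum in $\bfx$ (via $|\nabla f|,|\nabla^2 f|\leq\|f\|_{C_b^2}$) produces \eqref{eq:expdecayofderivative}.

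For part (i), which does not use \ref{H2}, the same reasoning still applies, but the Gronwall bounds now read $\|\mathcal{J}_t\|_{\mathrm{op}}\leq e^{(2\kappa_1-\kappa_3)t}$ and an analogous exponential-in-$t$ bound for $\|\mathcal{K}_t\|$; the constants depend only on $\kappa_1,\kappa_2,\kappa_3,\kappa_4$, and substitution in the Feynman--Kac representations yields \eqref{eq:C2toC2ineq}.

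\textbf{Main obstacle.} The conceptual framework (variational flow plus Gronwall) is standard; the nontrivial work is the precise $N$-bookkeeping. Making both $\langle(\nabla\adr)v,v\rangle$ and $\|\nabla^2\adr\|$ independent of $N$ requires exploiting the $1/N$ scaling of $\Kbar$ via two distinct mechanisms: a row-sum estimate in the diagonal-block contribution, and the Cauchy--Schwarz reduction $\bigl(\sum|v^i|\bigr)^2\leq N\|v\|^2$ in the off-diagonal cross term. This is exactly what produces the optimal threshold $\kappa_{\rm av}$ and what allows these estimates to pass into the many-particle analysis of Section \ref{Sec:MF} without losing powers of $N$, which in turn is what permits the joint limit $N\to\infty$, $\ep\to 0$.
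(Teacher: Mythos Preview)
Your route via the stochastic flow $(\mathcal{J}_t,\mathcal{K}_t)$ is a legitimate alternative to the paper's Bakry--Emery-type argument: the paper instead introduces the Lyapunov functional
\[
\Gamma(f)=\alpha|f|^2+\beta\sum_i|\partial_i f|^2+\sum_{i,j}|\partial_{ij}^2 f|^2
\]
and shows $(\partial_s-\bar\cL^N)\Gamma(\cPbar_s f)\leq -\delta\,\Gamma(\cPbar_s f)$ (with $\delta=0$ for part (i)) by explicit commutator computations, then applies Gronwall to $\cPbar_{t-s}\Gamma(\cPbar_s f)$. For part (ii) the two approaches are closely related and your sketch is essentially sound, though your assertion that the flow method reproduces \emph{exactly} the threshold $\kappa_{\rm av}$ is not justified: in the paper that constant arises from optimizing over the coupling parameter $\beta$ between first- and second-derivative terms in $\Gamma$, a coupling your decoupled treatment of $\mathcal{J}_t$ and $\mathcal{K}_t$ does not mirror. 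You will obtain \emph{some} $N$-independent threshold, likely not the same one.

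There is, however, a genuine gap in your treatment of part (i). The lemma claims a bound \emph{uniform in $t\geq 0$} under Hypothesis \ref{H1} alone, i.e.\ with $\kappa_3\in\R$ possibly negative. Your Gronwall estimate gives $\|\mathcal{J}_t\|_{\mathrm{op}}\leq e^{(2\kappa_1-\kappa_3)t}$, which \emph{grows} exponentially whenever $\kappa_3<2\kappa_1$; the same growth infects $\mathcal{K}_t$. Substituting these into the Feynman--Kac formulas yields a bound on $[[\cPbar_t f]]$ that blows up in $t$, not the uniform-in-$t$ inequality \eqref{eq:C2toC2ineq}. The paper avoids this precisely by including the term $\alpha|f|^2$ in $\Gamma$: since $|\cPbar_t f|\leq\|f\|_\infty$ by contractivity, this extra term provides a ``sink'' that absorbs the positive contributions coming from the derivative terms, allowing $(\partial_s-\bar\cL^N)\Gamma(\cPbar_s f)\leq 0$ without any sign condition on $\kappa_3$. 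There is no analogue of this mechanism in the pure flow picture; to rescue your approach for part (i) you would need an additional ingredient (e.g.\ a Bismut--Elworthy--Li representation trading a derivative on $f$ for one on the noise), which you do not invoke.
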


\subsection{Heuristics: formal expansion of the semigroup $\cP_t^{N, \ep}$}
To motivate the structure of the proof of Theorem \ref{theo:semigroup}, we first present a heuristic computation. 
One can consider formally expanding the semigroup $\cP_t^{N,\varepsilon}$ in  powers of $\varepsilon$ (this procedure is similar in spirit to the procedure presented in \cite{PS}, the only difference is that here we expand the semigroup, while in \cite{PS} the  emphasis is on the expansion of the generator):
\begin{equation}\label{AAA}
(\cP_t^{N,\varepsilon} f)(\bfx,\bfa)=f^0_t(\bfx,\bfa)+\varepsilon f_t^1(\bfx,\bfa)+...
\end{equation}
Using that $\cP_t^{N,\varepsilon}$ solves the PDE \eqref{eq:epsPDE}, i.e.
$$
\pa_t \cP_t^{N,\varepsilon} f- \cL^{N, \ep} \cP_t^{N,\varepsilon} f=0 \, ,
$$
and recalling that the generator $\cL^{N, \ep}$ of $\cP_t^{N, \ep}$ is given by \eqref{gen:Lep}, following \cite{PS}, 
we can insert \eqref{AAA} in the above  and, comparing terms containing the same power of $\ep$ we deduce:
\begin{align}
O\lr{\frac{1}{\varepsilon}}:& &\cL_F f_t^0=0\label{eq:orderminus1}\\
O(1):& &\partial_t f_t^0-\cL_Sf_t^0=\cL_Ff_t^1\label{eq:order0}
\end{align}
As we have explained, if we fix $X^{N,\ep}_t$ then the dynamics of $A^{N, \ep}(t)$ is ergodic (i.e. it admits a unique invariant measure). This implies that the only solutions $u(\bfx,\bfa)$ to $\cL_F u=0$ are constant in $\bfa$ 
\footnote{If $\cP_t^{A,x}$ denotes the semigroup associated to $A$ with the $x$-coefficient frozen. Then let $\mu^x$ denote the unique invariant measure of the frozen system, note that $\cP_t^{A,x}f$ converges to $\int f d\mu^x$. Now 
	$$
	\cP_t^{A,x}f(A)=f(A)+\int_0^t \cP_s^{A,x}\cL_Ff(A) ds.
	$$
	Therefore, if $f$ is a solution to $\cL_Ff=0$ then $\cP_t^{A,x}f(A)=f(A)$ and letting $t$ tend to $\infty$ we must have that $f=\mu^x(f)$, i.e. $f$ is a constant function in the variable $A$.}
so we have $f_t^0(\bfx,\bfa)=f_t^0(\bfx)$ from \eqref{eq:orderminus1}. Now integrating \eqref{eq:order0} with respect to $\mu_\bfx$, the unique invariant measure on $\ta$ with the $\bfx$-coefficient frozen, we obtain
\begin{equation}\label{ded}
\partial_tf_t^0(\bfx)-\bar{\cL}^N f_t^0(\bfx)=0 \, ,
\end{equation}
where we recall that, from \eqref{fbari} and \eqref{10star},  $\bar\cL^N$ is given by
\begin{align*}
\bar{\cL}^Nf(\bfx)
&=\sum_{i=1}^N\left[\lr{-\nabla_i V(x^i)+\sum_{j\neq i}
	\frac{\nu_f^N\varphi_R(|x^i-x^j|)}{\nu_f^N\varphi_R(|x^i-x^j|)+\nu_d} K(x^j-x^i)} \nabla_{i}f(\bfx)\right]\\
&+D {\rm Tr}(\rm{Hess}f)(\bfx).
\end{align*}
From \eqref{ded} one must then have $f_t^0(\bfx)=(\bar\cP_t^N f)(\bfx)$, hence, from \eqref{AAA},
\begin{equation}\label{BBBB}
(\cP_t^{N,\varepsilon} f)(\bfx,\bfa)= (\bar\cP_t^N f)(\bfx)+\varepsilon f_t^1(\bfx,\bfa)+...
\end{equation}
One  can now solve\footnote{{ This solution was found by first considering the case when $N=2$ in which \eqref{eq:order0} reduces to  two linear equations, then proposing an ansatz based on this solution and verifying that it does indeed solve \eqref{eq:order0}.
}} \eqref{eq:order0} to find $f_t^1$: 
\begin{equation*}
f_t^1(\bfx,\bfa) = \sum_{i=1}^N\sum_{j\neq i} \frac{1}{\nu_f^N\varphi_R(|x^i-x^j|)+\nu_d}   K(x^j- x^i) a_{ij} \partial_i\bar\cP_t^Nf(\bfx).
\end{equation*}
Note that the solution to \eqref{eq:order0} is {\em not} unique\footnote{ {This solution is not unique since $\cL_F$ has a non-trivial kernel and in particular any function which does not depend on $\bfa$ belongs to the kernel; this could be resolved by suitable boundary conditions but for our purposes we do not require the solution to be unique and will just work with the solution found in \eqref{eq:ft1def}.}}, however the above is one possible solution, and the one with which we will work. 
To simplify notation, define 
\begin{equation}\label{Ktilde}
\tilde{K}(z) = \frac{1}{\nu_f^N\varphi_R(z)+\nu_d}   K(z). \end{equation}
(We should  denote $\tilde K$ by $\tilde K^N$, we don't do so to avoid cumbersome formulas later on).
Now we can write
\begin{equation}\label{eq:ft1def}
\fone (\bfx,\bfa) = \sum_{i=1}^N\sum_{j\neq i} \tilde{K}(x^j- x^i) a_{ij} \partial_i\bar\cP_t^Nf(\bfx).
\end{equation}
We shall later use this explicit expression for $\fone$. Before moving on to the actual proof of Theorem \ref{theo:semigroup}, let us make a remark.
\begin{Note}\label{note:note45}\textup{A few comments on the definition of $f_t^1$. 
		\begin{itemize}
			\item It is clear from \eqref{BBBB} that, in order to control the difference $\cP_t^{N, \ep} -\bar\cP_t^N$, we need to control $f_t^1$; comparing the above expression for $f_t^1$ with the one for $\Ione$, one can see that the need to control $f_t^1$ is where Lemma \ref{lem:Ione} and Lemma \ref{lem:Itwo} stem from and, in turn, the reason for imposing Hypothesis \ref{H3} on the initial data.   Note that if $K$ is bounded (with bounded derivatives) then $\tilde{K}$ is also bounded (with bounded derivatives). This  relates to why \ref{H3} includes the two different cases of $K$ being bounded/unbounded. 
			\item If $f_t^1$ is as in \eqref{eq:ft1def}, then, as opposed to what what it may look like by observing the heuristic expansion \eqref{AAA}, $\ep f_t^1$ is not the whole term of order $\ep$ when we look at the difference between $\cP_t^{N, \ep}$ and $\bar\cP_t^{N, \ep}$ (cf \eqref{AAA} and \eqref{BBBB}). This is in no way   contradictory, because of the non-uniqueness of $f_t^1$. The whole term of order $\ep$ will appear during the rigorous proof, see \eqref{BB} and \eqref{eq:estimateonrwithV} below. 
		\end{itemize}
	}
\end{Note}

We now have all the tools to prove Theorem \ref{theo:semigroup}.
\subsection{Proof of Theorem \ref{theo:semigroup}} \label{sec:proofdetails_theorem}

\begin{proof}[Proof of Theorem \ref{theo:semigroup}]
	To simplify notation we set the diffusion coefficient $D=1$ and $n=1$, so that $f \in C_b^2(\R^N)$ and $K,V:\R \rightarrow \R$. We will therefore write $\partial_i$ instead of $\nabla_i$ and $V'$ denotes derivative of $V$ with respect to its argument. Let $f_t^1$ be the function  defined in \eqref{eq:ft1def} and observe 
	that, by a direct calculation,  such a function is a solution to
	\begin{equation}\label{eq:fonePDE}
	\bar\cL^N \cPbar_t f-\cL_S\cPbar_tf=\cL_F\fone.
	\end{equation}
	Motivated by \eqref{AAA} and \eqref{BBBB}, we set
	\begin{equation}\label{BB}
	r_t^{N,\varepsilon}(\bfx,\bfa):=(\cP_t^{N,\varepsilon} f)(\bfx,\bfa)-(\bar\cP_t^Nf)(\bfx)-\varepsilon f_t^1(\bfx,\bfa) \,.
	\end{equation}
	Differentiating $r_t^{N,\varepsilon}$ with respect to time gives
	\begin{equation*}
	\partial_tr_t^{N,\varepsilon}(\bfx,\bfa)=\partial_t(\cP_t^{N,\varepsilon} f)(\bfx,\bfa)-\partial_t(\bar\cP_t^Nf)(\bfx)-\varepsilon \partial_tf_t^1(\bfx,\bfa).
	\end{equation*}
	Using \eqref{eq:epsPDE} we have
	\begin{align*}
	\partial_t r_t^{N,\varepsilon}(\bfx,\bfa) \!& =\!
	\cL^{N,\varepsilon}\cP_t^{N,\varepsilon} f(\bfx,\bfa)-\partial_t\bar\cP_t^N f(\bfx)-\varepsilon \partial_tf_t^1(\bfx,\bfa)\\
	& \!\!\!\!\!\stackrel{\eqref{BB}}{=}\!\cL^{N,\varepsilon} r_t^{N,\varepsilon}(\bfx,\bfa)\!+\!
	\cL^{N,\varepsilon}\bar\cP_t^N f(\bfx) \!+\!\varepsilon \cL^{N,\varepsilon} f_t^1(\bfx,\bfa)\! -\!\partial_t\bar\cP_t^N f(\bfx) \!- \!\varepsilon \partial_tf_t^1(\bfx,\bfa).
	\end{align*}	
	Since $\bar\cP^N_tf(\bfx)$ does not depend on $\bfa$ we have $\cL^{N,\varepsilon}\bar\cP_t^N f(\bfx)=\cL_S\bar\cP_t^N f(\bfx)$ and using \eqref{eq:fonePDE} we obtain
	\begin{align*}
	\partial_t r_t^{N,\varepsilon}(\bfx,\bfa)&=\cL^{N,\varepsilon} r_t^{N,\varepsilon}(\bfx,\bfa)-\cL_Ff_t^1(\bfx,\bfa)+\varepsilon \cL^{N,\varepsilon} f_t^1(\bfx,\bfa) -\varepsilon \partial_tf_t^1(\bfx,\bfa)\\
	&=\cL^{N,\varepsilon} r_t^{N,\varepsilon}(\bfx,\bfa)+\varepsilon \left(\cL_S f_t^1(\bfx,\bfa) -\partial_tf_t^1(\bfx,\bfa)\right).
	\end{align*}
	Therefore, using the variation of constants formula, we have
	\begin{align*}
	r_t^{N,\varepsilon}(\bfx,\bfa) &= \cP_t^{N,\varepsilon} r_0^{N,\varepsilon}(\bfx,\bfa)+ \varepsilon\int_0^t \cP_{t-s}^{N,\varepsilon} \left(\cL_S f_s^1 -\partial_sf_s^1\right)(\bfx,\bfa)ds.
	\end{align*}
	The remainder of the proof is divided into 3 steps:
	\begin{itemize}
		\item Step 1: Show that 
		\begin{equation}\label{S1}
		\lvert \cP_t^{N,\varepsilon} r_0^{N,\varepsilon}\rvert \leq CN\varepsilon\left(C_1+\Ione(\bfx,\bfa)e^{-C_2\frac{t}{\varepsilon}}\right) \lVert f\rVert_{C_b^2(\R^N)} 
		\end{equation}
		\item Step 2: Show that
		$$
		\lvert\cP_{t-s}^{N,\varepsilon}(\cL_S-\partial_s)f_s^1\rvert\!\leq\! CN\!\left(\!C_1\! +\! C_3\Ione(\bfx,\bfa)e^{-C_2\frac{t}{\varepsilon}}\!+\! \Itwo(\bfx,\bfa)e^{-C_4\frac{t}{\varepsilon}} \right)\!\lVert f\rVert_{C_b^2(\R^N)}.
		$$
		Moreover, if ii) of Lemma \ref{lem:derivativeest} is satisfied, 
		$$
		\!\!\!\!\!\!\lvert\cP_{t-s}^{N,\varepsilon}(\cL_S-\partial_s)f_s^1\rvert\!\leq\! CNe^{-\delta s}\!\left(\!C_1\! +\! C_3\Ione(\bfx,\bfa)e^{\frac{-C_2t}{\varepsilon}}\!+\! \Itwo(\bfx,\bfa)e^{\frac{-C_4t}{\varepsilon}} \right)\!\lVert f\rVert_{C_b^2(\R^N)}.
		$$
		for some $\delta>0$.
		\item Step 3: Show that
		\begin{equation}\label{S3}
		\lvert\fone (\bfx,\bfa)\rvert\leq CN \Ione(\bfx,\bfa)\lVert f\rVert_{C_b^2(\R^N)}.
		\end{equation}
	\end{itemize}
	{The above estimates are valid for any $t\in[0,\infty)$ and for constants $C,C_1,...,C_4$ that are independent of $t$.}
	This completes the proof since Step 1 and Step 2 imply
	\begin{equation}\label{eq:estimateonr}
	\lvert r_t^{N,\varepsilon}(\bfx,\bfa)\rvert\!\leq\! C\varepsilon N(1+t)\left(C_1 + (1+C_3)\Ione(\bfx,\bfa)e^{-C_2\frac{t}{\varepsilon}}+ \Itwo(\bfx,\bfa)e^{-C_4\frac{t}{\varepsilon}} \right)\lVert f \rVert_{C_b^2(\R^N)}.
	\end{equation}
	Then, by \eqref{eq:estimateonr}, \eqref{BB} and Step 3, we have
	\begin{align*}
	&\left\lvert \cP_t^{N,\varepsilon} f(\bfx,\bfa)-\cPbar_t f(\bfx)\right\rvert \\
	&=\lvert \varepsilon\fone(\bfx,\bfa) + r_t^{N,\varepsilon}(\bfx,\bfa)\rvert\\
	&\leq CN\varepsilon (1+t) \left(C_1 + (1+C_3)\Ione(\bfx,\bfa)e^{-C_2\frac{t}{\varepsilon}}+ \Itwo(\bfx,\bfa)e^{-C_4\frac{t}{\varepsilon}} \right)\lVert f\rVert_{C_b^2(\R^N)}.
	\end{align*}
	Moreover, if ii) of Lemma \ref{lem:derivativeest} is satisfied,  from Step 1 and Step 2 we also have
	\begin{equation}\label{eq:estimateonrwithV}
	\lvert r_t^{N,\varepsilon}(\bfx,\bfa)\rvert\leq \frac{C\varepsilon N}{\delta}\left(C_1 + (1+C_3)\Ione(\bfx,\bfa)e^{-C_2\frac{t}{\varepsilon}}+ \Itwo(\bfx,\bfa)e^{-C_4\frac{t}{\varepsilon}} \right)\lVert f \rVert_{C_b^2(\R^N)}.
	\end{equation}
	Hence, 
	\begin{align*}
	\!\!\!\left\lvert \cP_t^{N,\varepsilon} f(\bfx,\bfa)\!-\!\cPbar_tf(\bfx)\right\rvert\!\leq\! CN\varepsilon\!\left(\!C_1 \!+\! (1\!+\!C_3)\Ione(\bfx,\bfa)e^{\frac{-C_2t}{\varepsilon}}\!\!+\! \Itwo(\bfx,\bfa)e^{\frac{-C_4t}{\varepsilon}} \!\right)\!  \lVert f\rVert_{C_b^2(\R^N)}.
	\end{align*}
	To prove Step 1, note that from \eqref{eq:epsPDE} we have  $r_0^{N,\varepsilon}(\bfx,\bfa)=f(\bfx)-f(\bfx)-\varepsilon f_0^1(\bfx,\bfa)=-\varepsilon f_0^1(\bfx, \bfa)$. From \eqref{eq:ft1def} and the definition of $\psi^i$ (see \eqref{def:psii}) we get
	\begin{align}\label{eq:Ptf01}
	\lvert \cP_t^{N,\varepsilon} r_0^{N,\varepsilon}(\bfx,\bfa)\rvert &=\varepsilon\lvert\cP_{t}^{N,\varepsilon} f_0^1(\bfx,\bfa)\rvert \leq  C\varepsilon \sum_{i=1}^N (\cP_t^{N,\varepsilon}\psi^i)(\bfx,\bfa) \lVert \partial_if\rVert_\infty \\
	&\stackrel{\eqref{defnorm}}{\leq} CN\varepsilon\lVert f\rVert_{C_b^2(\R^N)}(\cP_t^{N,\varepsilon}\Ione)(\bfx,\bfa).
	\end{align}
	Here $C$ is a positive constant that varies line by line but is independent of $N,\varepsilon$ and $\bfa$. By Lemma \ref{lem:Ione} and the above one obtains \eqref{S1}, 
	hence we have proven Step 1.
	
	To prove Step 2, we apply the operator $(\cL_S-\partial_s)$ to the definition of $f_s^1$ (see \eqref{eq:ft1def}).
	\begin{align} \label{DD}
	(\cL_S-\partial_s)f_s^1 & = \sum_{i=1}^N\sum_{j\neq i}(\cL_S-\partial_s)  \left( a_{ij}\tilde{K}(x^j- x^i) \partial_i (\cPbar_sf)(\bfx)\right)\nonumber\\
	&=\sum_{i=1}^N\sum_{j\neq i}  \cL_S( \tilde{K}(x^j- x^i)) a_{ij} \partial_i\cPbar_sf(\bfx) \\
	&+ 2\sum_{i=1}^N\sum_{j\neq i}\sum_{k=1}^N \partial_k( \tilde{K}(x^j- x^i)) a_{ij} \partial_{k}\partial_i\cPbar_sf(\bfx) \nonumber\\
	&+\sum_{i=1}^N\sum_{j\neq i} \tilde{K}(x^j- x^i) a_{ij} (\cL_S-\partial_s)  \left(\partial_i\cPbar_sf(\bfx)\right)
	\end{align}
	We study the three addends on the RHS of the above; to this end, we start by considering $\cL_S ( \tilde{K}(x^j- x^i))$ (contained in the first addend): 
	\begin{align*}
	\cL_S ( \tilde{K}(x^j\!-\! x^i))\! &= \sum_{k=1}^N\sum_{\ell\neq k} K(x^\ell-x^k)a_{k\ell}\partial_{k}( \tilde{K}(x^j- x^i)) + \sum_{k=1}^N \partial_{kk}^2 (\tilde{K}(x^j- x^i))\\
	&-\sum_{k=1}^N V'(x^k)\partial_k( \tilde{K}(x^j- x^i))\\
	&=-\sum_{\ell\neq i} K(x^\ell-x^i)a_{i\ell} \tilde{K}'(x^j- x^i) + \tilde{K}''(x^j- x^i)+ \tilde{K}''(x^j- x^i)\\
	&+\!\sum_{\ell\neq j}\! K(x^\ell\!-\!x^j)a_{j\ell}\tilde{K}'(x^j\!-\! x^i)
	\!+\!V'(x^i) \tilde{K}'(x^j- x^i)\!-\! V'(x^j) \tilde{K}'(x^j\!-\! x^i)\\
	&\leq C(1+\sum_{\ell\neq i} a_{i\ell}\lvert K(x^\ell-x^i)\rvert+\sum_{\ell\neq j} a_{j\ell}\lvert K(x^j-x^\ell)\rvert )+C\lvert x^i -x^j\rvert.
	\end{align*}
	To obtain the last line we have used that $V'$ is Lipschitz and that $K'$ and $K''$ are bounded (hence the boundedness of $\tilde K'$ and $\tilde{K}''$). Recalling  that $g$ is as defined in \ref{H3}, we then have
	\begin{equation}\label{C1}
	\cL_S ( \tilde{K}(x^j- x^i)) 
	\leq C(1+\psi^i(\bfx,\bfa)+\psi^j(\bfx,\bfa) )+C g(x^i -x^j).
	\end{equation}
	Next, we  consider the term $(\cL_S-\partial_s)  \left(\partial_i\cPbar_sf(\bfx)\right)$ (contained in the third addend of \eqref{DD}). 
	Note first that, from \eqref{eq:Kbardef} and \eqref{Ktilde}, we have 
	\begin{equation}\label{L}
	\bar{K}^N(z)= \nu_f^N \, \varphi_R(z) \tilde{K}(z).
	\end{equation}
	Hence,
	\begin{align*}
	&(\cL_S-\partial_s)  \left(\partial_i\cPbar_sf(\bfx)\right)\\
	& = (\cL_S\partial_i-\partial_i\overline{\cL}^N)  \left(\cPbar_sf(\bfx)\right)\\
	&=\sum_{k=1}^N\sum_{\ell\neq k} K(x^\ell-x^k) a_{k\ell}\partial_k\partial_i{\cPbar}_sf(\bfx) -\sum_{k=1}^NV'(x^k)\partial_k\partial_i\cPbar_sf(\bfx)\\
	&-\sum_{k=1}^N\sum_{\ell\neq k} \tilde{K}(x^\ell-x^k) \nu_f^N\varphi_R(x^\ell-x^k)\partial_i\partial_k{\cPbar}_sf(\bfx) + \sum_{k=1}^NV'(x^k)\partial_i\partial_k\cPbar_sf(\bfx)\\
	&-\sum_{k=1}^N\sum_{\ell\neq k}  \nu_f^N\partial_i(\tilde{K}(x^\ell-x^k)\varphi_R(x^\ell-x^k))\partial_k{\cPbar}_sf(\bfx)+ V''(x^i)\partial_i\cPbar_sf(\bfx)\, ,
	\end{align*}
	so that
	\eq{\label{C2}
		&(\cL_S\!-\!\partial_s)\!  \left(\partial_i\cPbar_sf(\bfx)\right)\!\\
		& \leq C\sum_{k=1}^N\sum_{\ell\neq k}  a_{k\ell}\lvert K(x^\ell-x^k)\rvert\lvert\partial_k\partial_i{\cPbar}_sf(\bfx)\rvert +C\sum_{k=1}^N(N-1) \nu_f^N\lvert\partial_i\partial_k\overline{\cP}^N_sf(\bfx)\rvert\\
		&+\sum_{\ell\neq i}  \nu_f^N(\varphi_R\tilde{K})'(x^\ell-x^i)\partial_i\overline{\cP}^N_sf(\bfx)-\sum_{k\neq i} \nu_f^N(\varphi_R\tilde{K})'(x^i-x^k)\partial_{{k}}\overline{\cP}^N_sf(\bfx){+C\lvert \partial_i{\cPbar}_sf(\bfx)\rvert}\\
		&\leq C\sum_{k=1}^N\sum_{\ell\neq k}  a_{k\ell}\lvert K(x^\ell-x^k)\rvert\lvert\partial_k\partial_i\overline{\cP}^N_sf(\bfx)\rvert +C\sum_{k=1}^N(N-1) \nu_f^N\lvert\partial_i\partial_k\overline{\cP}^N_sf(\bfx)\rvert \\
		&+C(N-1) \nu_f^N\lvert\partial_i\overline{\cP}^N_sf(\bfx)\rvert
		{+C \nu_f^N\sum_{k\neq i}\lvert\partial_k\overline{\cP}^N_sf(\bfx)\rvert }{+C\lvert \partial_i{\cPbar}_sf(\bfx)\rvert}\\
		&\leq C\sum_{k=1}^N\psi^k(\bfx,\bfa)\lvert\partial_k\partial_i\overline{\cP}^N_sf(\bfx)\rvert +C\sum_{k=1}^N\lvert\partial_i\partial_k\overline{\cP}^N_sf(\bfx)\rvert+C\lvert\partial_i\overline{\cP}^N_sf(\bfx)\rvert 
		{+C \nu_f^N\sum_{k\neq i}\lvert\partial_k\overline{\cP}^N_sf(\bfx)\rvert }\,. 
	}
	From \eqref{DD} we then have
	\begin{align*}
	(\cL_S-\partial_s)f_s^1  & {\leq}\sum_{i=1}^N\sum_{j\neq i} {\lvert} \cL_S( \tilde{K}(x^j- x^i)){\rvert} a_{ij} {\lvert\partial_i\cPbar_sf(\bfx)\rvert} \\
	&+ {2}\sum_{i=1}^N\sum_{j\neq i} a_{ij} \lvert \tilde{K}'(x^j- x^i)\rvert  \, {\lvert\partial_j\partial_i\cPbar_sf(\bfx)\rvert} 
	{+2 \sum_{i=1}^N\sum_{j\neq i} a_{ij} \lvert \tilde{K}'(x^j- x^i)\rvert} \, {\lvert\partial_i\partial_i\cPbar_sf(\bfx)\rvert}\\
	&+\sum_{i=1}^N\psi^i(\bfx,\bfa)\lvert(\cL_S-\partial_s)  \left(\partial_i\cPbar_sf(\bfx)\right)\rvert\\
	&\!\!\!\!\!\!\!\!\stackrel{\eqref{C1}, \eqref{C2}}{\leq} \sum_{i=1}^N\sum_{j\neq i}  C\left(1+\psi^i(\bfx,\bfa)+\psi^j(\bfx,\bfa) + g(x^i -x^j)\right) a_{ij} {\lvert\partial_i\cPbar_sf(\bfx)\rvert} \\
	&+ {2 C\sum_{i=1}^N \sum_{j\neq i} a_{ij}}{\lvert\partial_j\partial_i\cPbar_sf(\bfx)\rvert}\!+ 
	{2 C\sum_{i=1}^N \psi^i(\bfx,\bfa)}{\lvert\partial_i\partial_i\cPbar_sf(\bfx)\rvert} \\
	&+\!C\!\sum_{i=1}^N\psi^i(\bfx,\bfa)\sum_{k=1}^N\psi^k(\bfx,\bfa){\lvert\partial_k\partial_i\overline{\cP}^N_sf(\bfx)\rvert} 
	+C\sum_{i=1}^N\psi^i(\bfx,\bfa)\sum_{k=1}^N{\lvert\partial_i\partial_k\overline{\cP}^N_sf(\bfx)\rvert} \\
	& +C\sum_{i=1}^N\psi^i(\bfx,\bfa){\lvert\partial_i\overline{\cP}^N_sf(\bfx)\rvert} {+C\nu_f^N\sum_{i=1}^N \psi^i(\bfx,\bfa)\sum_{k\neq i} \lvert\partial_k\overline{\cP}^N_sf(\bfx)\rvert }\,. 
	\end{align*}
	{Now we want to eliminate in the right hand side above all derivatives of $\cPbar_sf$ in favor of $[[\cPbar_sf]]_{C_b^2(\R^{N})}$, and all terms involving $a_{ij}$ and $\psi^i$ in favor of $\Ione$, $\Itwo$. Using  $g\geq 1$ and the definitions of $\psi^i, \Ione,\Itwo$ {and $[[\cdot]]_{C_b^2(\R^{N})}$} {(ref. (\ref{def:psii},\ref{Ionefinite}, \ref{Itwofinite}) and \eqref{defseminorm})}
		we reach the following expression:
		\begin{align*}
		(\cL_S-\partial_s)f_s^1  &
		\leq CN \Ione [[\cPbar_sf]]_{C_b^2(\R^{N})}+CN \Itwo [[\cPbar_sf]]_{C_b^2(\R^{N})}+ C \sum_i \sum_j a_{ij} \psi^j  
		\lvert \partial_i\cPbar_sf(\bfx)\rvert \\
		& +C \sum_i\sum_k \psi^i \lvert \partial_k \partial_i\cPbar_sf(\bfx)\rvert 
		+C \sum_i\sum_k \psi^i \psi^k \lvert \partial_k \partial_i\cPbar_sf(\bfx)\rvert \,.
		\end{align*}
	}
	{
		The last three terms above are dealt with using Cauchy-Schwarz's inequality, and $y^{1/2}\leq (1+y)/2$ for $y\geq 0$:
		\begin{align*}
		\sum_i \sum_j a_{ij} \psi^j  
		\lvert \partial_i\cPbar_sf(\bfx)\rvert & \leq 
		\sum_j \psi^j \left(\sum_i a_{ij}^2\right)^{1/2} [[\cPbar_sf]]_{C_b^2(\R^{N})} \\
		&\leq [[\cPbar_sf]]_{C_b^2(\R^{N})}\frac12 \sum_j \psi^j (1+\psi^j)
		\leq \frac12 N(\Ione+\Itwo) [[\cPbar_sf]]_{C_b^2(\R^{N})}~;\\
		\sum_i \sum_k \psi^i  
		\lvert \partial_i\partial_k \cPbar_sf(\bfx)\rvert & \leq 
		[[\cPbar_sf]]_{C_b^2(\R^{N})} \left(\sum_{i,k}(\psi^i)^2\right)^{1/2}\\
		&\leq N\sqrt{\Itwo} [[\cPbar_sf]]_{C_b^2(\R^{N})} \leq \frac12  N(1+\Itwo)[[\cPbar_sf]]_{C_b^2(\R^{N})}~; \\
		\sum_i \sum_k \psi^i \psi^k 
		\lvert \partial_i\partial_k \cPbar_sf(\bfx)\rvert & \leq 
		[[\cPbar_sf]]_{C_b^2(\R^{N})} \left(\sum_{i,k}(\psi^i)^2(\psi^k)^2\right)^{1/2}\\
		&\leq N \Itwo [[\cPbar_sf]]_{C_b^2(\R^{N})}~.
		\end{align*}
	}
	{We now put everything together, and apply the semigroup $\cP_{t-s}^{N,\varepsilon}$, obtaining
		\begin{align*}
		\cP^{N,\varepsilon}_{t-s}(\cL_S-\partial_s)f_s^1
		&\leq  CN[[\cPbar_sf]]_{C_b^2(\R^N)} \left(1+\cP^{N,\varepsilon}_{t-s}\Ione(\bfx,\bfa) +\cP^{N,\varepsilon}_{t-s}\Itwo(\bfx,\bfa) \right),  
		\end{align*}
	}
	{where $C$ has been redefined.}
	By Lemma \ref{lem:Ione} and Lemma \ref{lem:Itwo} we then get
	\begin{align*}
	\cP^{N,\varepsilon}_{t-s}(\cL_S-\partial_s)f_s^1
	&\leq CN[[\cPbar_sf]]_{C_b^2(\R^N)} {\left(1+2C_1 + (1+C_3)\Ione(\bfx,\bfa)e^{-C_2\frac{t}{\varepsilon}}+ \Itwo(\bfx,\bfa)e^{-C_4\frac{t}{\varepsilon}} \right)}  \\
	&\leq {3}CN[[\cPbar_sf]]_{C_b^2(\R^N)} \left(C_1 + C_3\Ione(\bfx,\bfa)e^{-C_2\frac{t}{\varepsilon}}+ \Itwo(\bfx,\bfa)e^{-C_4\frac{t}{\varepsilon}} \right) .
	\end{align*}
	(The inequality holds provided $C_1\geq 1,C_3\geq 2$, which may be assumed without loss of generality.) Then Step 2 follows from applying Lemma \ref{lem:derivativeest}.
	Finally we prove Step 3. By the definition of $\Ione$ we have
	\begin{equation}
	\lvert\fone (\bfx,\bfa)\rvert\leq CN \Ione(\bfx,\bfa) {[[\cPbar_sf]]_{C_b^2(\R^N)}}.
	\end{equation}
	The inequality \eqref{S3} then follows from Lemma \ref{lem:derivativeest} and the proof is complete. 
\end{proof}

\subsection{Proof of Lemma \ref{lem:Ione}, \ref{lem:Itwo} and \ref{lem:derivativeest}}
We start by proving the bounds on the semigroup $\cPbar_t$ contained in  Lemma \ref{lem:derivativeest} and then prove Lemma \ref{lem:Ione} and \ref{lem:Itwo}.  Throughout this section, without loss of generality, we set the diffusion coefficient $D=1$ and $n=1$, so that $f \in C_b^2(\R^N)$ and $K,V:\R \rightarrow \R$. We will therefore write $\partial_i$ instead of $\nabla_i$ and $V'$ denotes derivative of $V$ with respect to its argument (similarly for $K'$).
\begin{proof}[Proof of Lemma \ref{lem:derivativeest}]
	Define the function
	\begin{equation}\label{defGamma}
	\Gamma(f):=\alpha\left\vert f(\bfx) \right\vert^2+\beta\sum_{i=1}^N\left\vert \partial_i f(\bfx)\right\vert^2+\sum_{i,j=1}^N \left\vert \partial_{ij}^2f(\bfx) \right\vert^2 \, 
	\end{equation}
	where $\alpha, \beta$ are positive constants to be chosen later. 
	Note that if we can prove
	\begin{equation}\label{eq:boundedderivatives}
	\Gamma(\cPbar_t f) \leq C \left(\sum_{i=1}^N\lVert \partial_{i}f \rVert_\infty^2+ \sum_{i,j=1}^N\lVert \partial_{ij}^2f \rVert_\infty^2\right)
	\end{equation}
	for some constant $C$ dependent only on $\kappa_1,\kappa_2,\kappa_3$, and $\kappa_4$, then \eqref{eq:C2toC2ineq} holds. On the other hand, to prove \eqref{eq:expdecayofderivative}  it is  sufficient to take $\alpha=0$ in the definition of $\Gamma$ and, with this choice of $\alpha$, show that for some $\delta>0$ one has
	\begin{equation}\label{eq:gronwallrequirement} 
	(\partial_s-\bar\cL^N)\Gamma(f_s)\leq -\delta \Gamma(f_s) \, ,
	\end{equation}
	where we are using the shorthand  notation  $f_t(\bfx):=(\bar\cP_t^Nf )\bfx$. 
	Indeed, by standard semigroup properties,  $\partial_s(\cPbar_{t-s}\Gamma(f_s)) =\cPbar_{t-s}((\partial_s-\bar\cL^N)\Gamma(f_s))$, and therefore \eqref{eq:gronwallrequirement} implies
	\begin{equation*}
	\partial_s(\cPbar_{t-s}\Gamma(f_s))  \leq -\delta \cPbar_{t-s}\Gamma(f_s).
	\end{equation*}
	If we apply  Gronwall's inequality (in the $s$ variable) to the above and calculate the resulting expression in $s=t$,  we obtain
	\eqh{\Gamma (\cPbar_t f)\leq e^{-\delta t}\cPbar_t(\Gamma(f))\, ;} 
	{by Markovianity, $\cPbar_t$ is a contraction in the $L^\infty$ norm}, thus \eqref{eq:expdecayofderivative} holds. 
	
	Similarly, in order to prove   \eqref{eq:boundedderivatives}, one just needs to show that \eqref{eq:gronwallrequirement} holds with  $\delta=0$, however this time with $\alpha>0$ to be chosen, as we will show.
	To this end, we expand the expression  $(\partial_s-\bar\cL^N)\Gamma(f_s)$ as follows:
	\begin{align*}
	(\partial_s-\bar\cL^N)\Gamma(f_s) =& -2\alpha\sum_{i=1}^N\lvert \partial_i f_s\rvert^2+2\beta\sum_{i=1}^N( [\partial_i,\bar\cL^N]f_s)(\partial_if_s) \\
	-& 2\beta\sum_{i,j=1}^N\lvert \partial_{ij}^2f_s\rvert^2 + 2\sum_{i,j=1}^N( [\partial_{ij}^2,\bar\cL^N]f_s)(\partial_{ij}^2f_s) - 2\sum_{i,j,k=1}^N\lvert \partial_{ijk}^3f_s\rvert^2\, ,
	\end{align*}
	where in the above $[\cdot, \cdot]$ denotes the commutator\footnote{We recall that if $X$ and $Y$ are two differential operators, the commutator $[X,Y]$ is the differential operator defined as $[X,Y]f):= (XY)(f)-(YX)(f)$. } between two differential operators. The above equality can be obtained via direct calculations or by using shortcuts similar to those typically employed in Bakry-Emery-type approaches, see for example \cite[equation (26)]{CrisanOttobre} or also \cite{zeg}.  
	Using  the definition of the function $\adr^i$, see \eqref{fbari}, and  \eqref{10star}, we can write
	\begin{equation*}
	\bar{\cL}^Nf(\bfx) = \sum_{i=1}^N \adr^i(\bfx)\partial_if(\bfx) + \sum_{i=1}^N \partial_{ii}^2f(\bfx).
	\end{equation*}
	Now
	\begin{align*}
	[\partial_i,\bar\cL^N] &= \sum_{j=1}^N[\partial_i, \adr^j(\bfx)\partial_j]=\sum_{j=1}^N(\partial_i \adr^j(\bfx))\partial_j,
	\end{align*}
	and similarly
	\begin{align*}
	[\partial^2_{ij},\bar\cL^N]&= \sum_{k=1}^N[\partial^2_{ij}, \adr^k(x)\partial_k]\\
	&=\sum_{k=1}^N (\partial_{ij}^2 \adr^k(\bfx))\partial_k+(\partial_i\adr^k(\bfx))\partial_{jk}^2 + (\partial_j\adr^k(\bfx))\partial_{ik}^2.
	\end{align*}
	
	For $i,j,k$ all distinct we have
	\begin{align*}
	\partial_{i}\adr^i(\bfx)&=\sum_{\ell\neq i} \partial_i\Kbar(x^\ell-x^i) -\partial^2_{ii} V(x^i)\\
	&= -\sum_{\ell\neq i} (\Kbar)'(x^\ell-x^i) - V''(x^i) \leq \kappa_1-\kappa_3,\\
	\partial_{i}\adr^j(\bfx) &= {(\Kbar)'(x^i-x^j) \leq \kappa_1/N,}
	\end{align*}
	and 
	\begin{align*}
	\partial_{ij}\adr^k(\bfx) &=0\\
	\lvert\partial_{ii}^2\adr^k(\bfx)\rvert &= \lvert (\Kbar)''(x^i-x^k)\rvert \leq \kappa_2/N\\
	\lvert\partial_{ik}^2\adr^k(\bfx)\rvert &=\lvert-(\Kbar)''(x^i-x^k)\rvert\leq \kappa_2/N\\
	\lvert\partial_{ii}^2\adr^i(\bfx)\rvert &=\left| \sum_{\ell\neq i}  (\Kbar)''(x^\ell - x^i)-V'''(x^i)\right| \leq \kappa_2+\kappa_4.
	\end{align*}
	Therefore we have
	\begin{align*}
	(\partial_s-\bar\cL^N)\Gamma(f_s) =& -2\alpha\sum_{i=1}^N\lvert \partial_i f_s\rvert^2+2\beta\sum_{i=1}^N \sum_{j=1}^N(\partial_i \adr^j(\bfx))(\partial_j f_s)(\partial_if_s) \\
	&- 2\beta\sum_{i,j=1}^N\lvert \partial_{ij}^2f_s\rvert^2 + 2\sum_{i,j,k=1}^N (\partial_{ij}^2 \adr^k(\bfx))(\partial_kf_s)(\partial_{ij}^2f_s)\\
	&+2\sum_{i,j,k=1}^N(\partial_i\adr^k(\bfx))(\partial_{jk}^2f_s)(\partial_{ij}^2f_s) - 2\sum_{i,j,k=1}^N\lvert \partial_{ijk}^3f_s\rvert^2\\
	\leq& -2\alpha\sum_{i=1}^N\lvert \partial_i f_s\rvert^2+2\beta\sum_{i=1}^N (\kappa_1-\kappa_3)(\partial_i f_s)^2\\
	&+\frac{2\beta\kappa_1}{N}\sum_{i=1}^N \sum_{j\neq i}\lvert\partial_j f_s\rvert\lvert\partial_if_s\rvert - 2\beta\sum_{i,j=1}^N\lvert \partial_{ij}^2f_s\rvert^2 \\
	&+ 2\sum_{i}^N (\kappa_2+\kappa_4)\lvert\partial_if_s\rvert\lvert\partial_{ii}^2f_s\lvert+ \frac{4\kappa_2}{N}\sum_{i}\sum_{j\neq i} \lvert\partial_if_s\rvert\lvert\partial_{ij}^2f_s\rvert\\
	&+ \frac{2\kappa_2}{N}\sum_{i}^N\sum_{k\neq i} \lvert\partial_kf_s\rvert\lvert\partial_{ii}^2f_s\rvert\\
	&+\frac{2\kappa_1}{N}\sum_{i,j=1}^N\sum_{k\neq i}\lvert\partial_{jk}^2f_s\rvert\lvert\partial_{ij}^2f_s\rvert+2(\kappa_1-\kappa_3)\sum_{i,j}^N(\partial_{ij}^2f_s)^2.
	\end{align*}
	Now by using Young's inequality we have
	\begin{align*}
	&(\partial_s-\bar\cL^N)\Gamma(f_s)     \leq -2\alpha\sum_{i=1}^N\lvert \partial_i f_s\rvert^2+2\beta\sum_{i=1}^N (\kappa_1-\kappa_3)(\partial_i f_s)^2+\beta\kappa_1\sum_{i=1}^N \lvert\partial_if_s\rvert^2 \\
	&+\beta\kappa_1\sum_{j=1}^N\lvert\partial_j f_s\rvert^2- 2\beta\sum_{i,j=1}^N\lvert \partial_{ij}^2f_s\rvert^2 + (\kappa_2+\kappa_4)\sum_{i=1}^N \lvert\partial_if_s\rvert^2+(\kappa_2+\kappa_4)\sum_{i=1}^N \lvert\partial_{ii}^2f_s\lvert^2\\
	&+ \frac{2\kappa_2}{N}\sum_{i}\sum_{j\neq i} \lvert\partial_{ij}^2f_s\rvert^2+2\kappa_2\sum_{i} \lvert\partial_if_s\rvert^2+ \kappa_2\sum_{k=1}^N \lvert\partial_kf_s\rvert^2+\kappa_2\sum_{i}^N \lvert\partial_{ii}^2f_s\rvert^2\\
	&+\kappa_1\sum_{j=1}^N\sum_{k\neq i}\lvert\partial_{jk}^2f_s\rvert^2+\kappa_1\sum_{i,j=1}^N\lvert\partial_{ij}^2f_s\rvert^2+2(\kappa_1-\kappa_3)\sum_{i,j}^N(\partial_{ij}^2f_s)^2.
	\end{align*}
	Finally, collecting all terms gives
	\begin{align*}
	(\partial_s-\bar\cL^N)\Gamma(f_s)     \leq& \left(-2\alpha-2\beta\kappa_3+4\beta\kappa_1 + \kappa_4+4\kappa_2\right)\sum_{i=1}^N\lvert \partial_i f_s\rvert^2\\
	&+\left(- 2\beta+2\kappa_2+\kappa_4+ \frac{2\kappa_2}{N}+4\kappa_1-2\kappa_3\right)\sum_{i,j}^N(\partial_{ij}f_s)^2.
	\end{align*}
	If we choose
	$$2\beta\geq \max \{0,2\kappa_2+\kappa_4+ \frac{2\kappa_2}{N}+4\kappa_1-2\kappa_3\}$$
	and 
	$${2\alpha=4\beta\kappa_1 + \kappa_4+4\kappa_2 \,}$$ 
	then we have $(\partial_s-\bar\cL)\Gamma(f_s)\leq 0$ as required to prove \eqref{eq:boundedderivatives}. 
	It should be emphasized that $\alpha$ and $\beta$ can be taken independent of $N$. Moreover if in the above calculations we set $\alpha=0$ and choose $\delta>0$ and $\beta>0$ such that $$2\beta=\delta +4\kappa_2+\kappa_4+4\kappa_1,$$ 
	and
	\begin{equation*}
	{  -\kappa_3 \leq -\frac{\delta}{2}-2\kappa_1-\frac{\kappa_4}{2\beta}-\frac{2\kappa_2}{\beta},}
	\end{equation*}
	we have
	\begin{align*}
	(\partial_s-\bar\cL^N)\Gamma(f_s)     &\leq -\delta\lr{\beta \sum_{i=1}^N\lvert \partial_i f_s\rvert^2+\sum_{i,j}^N(\partial_{ij}f_s)^2}=
	-\delta \Gamma(f_s),
	\end{align*}
	where $\Gamma(f_s)$ is given by \eqref{defGamma} with $\alpha=0$. This completes the proof of \eqref{eq:expdecayofderivative}.
\end{proof}

\begin{proof}[Proof of Lemma \ref{lem:Ione}]\label{sec:proofdetails_lemma}
	As in previous proofs, without loss of generality we set the diffusion coefficient $D=1$ and $n=1$.
	From \eqref{def:Lf} and \eqref{def:psii},
	\begin{align*}
	\cL_F\psi^i(\bfx,\bfa) &= \sum_{j:j\neq i}\left(-\nu_da_{ij}+\nu_f(x^i,x^j)(1-a_{ij})\right) g(x^j-x^i)\\
	&\leq -\nu_d \psi^i(\bfx,\bfa)+\sum_{j:j\neq i}\nu_f(x^i,x^j)g(x^j-x^i)\, ,
	\end{align*}
	having used the fact that $1-a_{ij}\leq 1$. 
	Note that if $\lvert x^j-x^i\rvert>R$ then $\tilde{\nu}_f(x^i,x^j)=\nu_f^N\varphi_R(|x^j-x^i|)=0$,  hence
	\begin{equation}\label{3dots}
	\tilde{\nu}_f(x^i,x^j)g(x^j-x^i) \leq \nu_f^N\sup_{\lvert z\rvert \leq R}g(z)\leq \frac{C}{N}, 
	\end{equation}
	where the constant on the r.h.s. depends on $R$, 
	so that
	$$\sum_{j:j\neq i}\tilde{\nu}_f(x^i,x^j)g(x^j-x^i) \leq N\nu_f^N\sup_{\lvert z\rvert \leq R}g(z)\leq C$$ for some constant $C$ which depends on $R$ but  is independent of $N$.  In what follows $C$ may vary from  line to line. We then have
	\begin{equation*}
	\cL_F\psi^i(\bfx,\bfa) \leq -\nu_d \psi^i(\bfx,\bfa)+C.
	\end{equation*}
	Taking the sum over $i$ and dividing by $N$ on both sides we obtain
	\begin{equation}\label{bla}
	\cL_F\Ione(\bfx,\bfa) \leq -\nu_d \Ione(\bfx,\bfa)+C.
	\end{equation}
	For $j \neq i$, using \eqref{def:Ls} we get
	\begin{align*}
	\cL_S(g(x^j-x^i))& = \sum_{k=1}^N\sum_{r\neq k}K(x^r-x^k)a_{rk}\partial_{k}(g(x^j-x^i)) +\sum_{k=1}^N\partial_{kk}^2(g(x^j-x^i))\\
	&- \sum_{k=1}^N \partial_kV(x^k)\partial_k(g(x^j-x^i))\\
	&=\sum_{r\neq j} K(x^r-x^j)a_{rj}g'(x^j-x^i){-}\sum_{r\neq i} K(x^r-x^i)a_{ri}g'(x^j-x^i)\\
	&+ 2g''(x^j-x^i)
	+V'(x^i)g'(x^j-x^i)-  V'(x^j)g'(x^j-x^i)\\
	&\leq C\sum_{r\neq j} g(x^r-x^j)a_{rj}+C\sum_{r\neq i} g(x^r-x^i)a_{ri}+ 2C+C\lvert x^i-x^j\rvert \,.
	\end{align*}
	To obtain the last inequality we have used that $g'$ is bounded, $V'$ is Lipschitz and  $\lvert K(z)\rvert \leq g(z)$. Now using  $\lvert z\rvert \leq g(z)$ and the definition of $\psi^i$  \eqref{def:psii}, we have
	\begin{align}\label{eq:LSg}
	\cL_S(g(x^j-x^i)) \leq C\psi^j(\bfx,\bfa)+C\psi^i(\bfx,\bfa)+ 2C+Cg( x^j-x^i),
	\end{align}	
	and so
	\begin{align*}
	\cL_S\psi^i(\bfx,\bfa) &= \sum_{j:j\neq i}a_{ij} \cL_S(g(x^j-x^i))\\
	&\leq \sum_{j:j\neq i}a_{ij} C\psi^j(\bfx,\bfa)+CN\psi^i(\bfx,\bfa)+ 2C\sum_{j:j\neq i}a_{ij}+\sum_{j:j\neq i}a_{ij} Cg( x^i-x^j)\\
	&\leq C\sum_{j:j\neq i} \psi^j(\bfx,\bfa)+CN\psi^i(\bfx,\bfa)+ 2C\psi^i(\bfx,\bfa)+C\psi^i(\bfx,\bfa).
	\end{align*}
	To obtain the last inequality we have used  $1\leq g(x^i-x^j)$ and once again the definition of $\psi^i$. Taking the sum over $i$ and dividing by $N$ we finally obtain
	\begin{align}\label{blabla}
	\cL_S\Ione(\bfx,\bfa) \leq 2CN\Ione(\bfx,\bfa)+ 2C\Ione(\bfx,\bfa)+C\Ione(\bfx,\bfa) \leq CN\Ione(\bfx,\bfa).
	\end{align}	
	We can now estimate $\cP_t^{N,\ep} \Ione(\bfx,\bfa)$; indeed, using the fact that the semigroup and its generator commute, together with \eqref{bla} and \eqref{blabla}, 
	\begin{align*}
	\partial_t\cP_t^{N,\ep}\Ione(\bfx,\bfa)&= \cP_t^{N,\ep}\left(\frac{1}{\varepsilon}\cL_F\Ione+\cL_S\Ione\right)(\bfx,\bfa)\\
	&\leq -\frac{\nu_d}{\varepsilon}\cP_t^{N,\ep}\Ione(\bfx,\bfa)+\frac{C}{\varepsilon} +CN\cP_t^{N,\ep}\Ione(\bfx,\bfa).
	\end{align*}
	Now, by Gronwall's inequality, we obtain
	\begin{equation*}
	\cP_t^{N,\ep}\Ione(\bfx,\bfa) \leq \frac{C}{\nu_d-CN\varepsilon} + \Ione(\bfx,\bfa) e^{\left(CN\varepsilon-\nu_d\right)\frac{t}{\varepsilon}}.
	\end{equation*}
	Therefore, for $N\varepsilon$ sufficiently small we have
	\begin{equation}\label{eq:pisiineq}
	\cP_t^{N,\ep}\Ione(\bfx,\bfa) \leq C_1 + \Ione(\bfx,\bfa) e^{-C_2\frac{t}{\varepsilon}}
	\end{equation}
	where $C_1,C_2$ are positive constants independent of $N$ and $\varepsilon$ (as long as $N\ep \ll 1$). 
\end{proof}

\begin{proof}[Proof of Lemma \ref{lem:Itwo}] 
	The proof is similar in spirit to the proof of Lemma \ref{lem:Ione}. Note that
	\begin{align*}
	\psi^i(\bfx,\bfa)^2 &= \sum_{j:j\neq i}\sum_{k:k\neq i} a_{ij}a_{ik}g(x^j-x^i)g(x^k-x^i)\\
	&=\sum_{j:j\neq i}\sum_{k:k\neq i,j} a_{ij}a_{ik}g(x^j-x^i)g(x^k-x^i) +\sum_{j:j\neq i} a_{ij}g(x^j-x^i)^2.
	\end{align*}
	When we apply $\cL_F$ to $\psi^i(\bfx,\bfa)^2$ we get
	\begin{align*}
	\cL_F(\psi^i(\bfx,\bfa)^2) &= -2\nu_d\sum_{j:j\neq i}\sum_{k:k\neq i,j} \!\!\!\!a_{ij}a_{ik}g(x^j-x^i)g(x^k-x^i) \!-\!\nu_d \sum_{j:j\neq i} a_{ij}g(x^j-x^i)^2\\
	&+\sum_{j:j\neq i}\sum_{k:k\neq i,j} \tilde\nu_f(x^i,x^j)(1-a_{ij})a_{ik}g(x^j-x^i)g(x^k-x^i) \\
	&+\sum_{j:j\neq i}\sum_{k:k\neq i,j}\tilde\nu_f(x^i,x^k) a_{ij}(1-a_{ik})g(x^j-x^i)g(x^k-x^i)\\
	&+\sum_{j:j\neq i} \tilde\nu_f(x^i,x^j)(1-a_{ij})g^2(x^j-x^i)\\
	&\leq -\nu_d\psi^i(\bfx,\bfa)^2+\sum_{j:j\neq i}\sum_{k:k\neq i,j} \nu_f(x^i,x^j)a_{ik}g(x^j-x^i)g(x^k-x^i) \\
	&+\!\!\!\sum_{j:j\neq i}\sum_{k:k\neq i,j}\!\!\!\!\tilde\nu_f(x^i,x^k) a_{ij}g(x^j-x^i)g(x^k-x^i)\!+\!\!\sum_{j:j\neq i}\!\! \tilde\nu_f(x^i,x^j)g^2(x^j-x^i).
	\end{align*}
	By using the same reasoning as in \eqref{3dots} (with $g^2$ instead of $g$),
	\begin{align*}
	\cL_F(\psi^i(\bfx,\bfa)^2) &\leq -\nu_d\psi^i(\bfx,\bfa)^2+\sum_{k:k\neq i,j} a_{ik}g(x^k-x^i)+\sum_{j:j\neq i} a_{ij}g(x^j-x^i)+C\\
	&\leq -\nu_d\psi^i(\bfx,\bfa)^2+2\psi^i(\bfx,\bfa)+C \, ,
	\end{align*}
	which gives
	\begin{align}\label{eq:LFItwo}
	\cL_F\Itwo(\bfx,\bfa) \leq -\nu_d\Itwo(\bfx,\bfa)+\frac{2}{N}\sum_{i=1}^N\psi^i(\bfx,\bfa)+C
	\end{align}
	Now we shall calculate $\cL_S\Itwo$. Using \eqref{eq:LSg} we have
	\begin{align*}
	\cL_S(g(x^j-x^i)g(x^k-x^i)) &=\cL_S[g(x^j-x^i)]g(x^k-x^i)+ g(x^j-x^i)\cL_S[g(x^k-x^i)]\\
	&+\sum_{\ell=1}^N\partial_\ell(g(x^j-x^i))\partial_\ell(g(x^k-x^i))\\
	&\leq C\psi^j(\bfx,\bfa)g(x^k-x^i)
	+C\psi^i(\bfx,\bfa)g(x^k-x^i)\\
	&+ 2Cg(x^k-x^i)+2Cg( x^j-x^i)g(x^k-x^i) \\
	&+ C\psi^k(\bfx,\bfa)g(x^j-x^i)+C\psi^i(\bfx,\bfa)g(x^j-x^i)\\
	& + 2Cg(x^j-x^i)+g'(x^j-x^i)g'(x^k-x^i).
	\end{align*}
	Now we can insert this estimate into the definition of $\Itwo$ and, using the fact that $\cLs$ only  acts in the variable $\bfx$, 
	\begin{align*}
	\!\!\cL_S(\psi^i(\bfx,\bfa)^2)\!&\leq\! C\!\!\sum_{j:j\neq i}\sum_{k:k\neq i}\!\! a_{ij}a_{ik}\psi^j(\bfx,\bfa)g(\!x^k\!-\!x^i)\!+\!C\!\sum_{j:j\neq i}\sum_{k:k\neq i}\!\!\! a_{ij}a_{ik}\psi^i(\bfx,\bfa)g(\!x^k\!-\!x^i)\\
	&+ \!\!2C\sum_{j:j\neq i}\sum_{k:k\neq i}\!\! a_{ij}a_{ik}g(x^k-x^i)+2C\!\!\sum_{j:j\neq i}\sum_{k:k\neq i}\!\! a_{ij}a_{ik}g( x^j-x^i)g(x^k-x^i) \\
	&+ \!\!C\!\!\!\sum_{j:j\neq i}\sum_{k:k\neq i} \!\!\!a_{ij}a_{ik}\psi^k(\bfx,\bfa)g(x^j-x^i)\!+\!C\!\sum_{j:j\neq i}\sum_{k:k\neq i} \!\!\!a_{ij}a_{ik}\psi^i(\bfx,\bfa)g(x^j-x^i)\\
	&+\!\! 2C\sum_{j:j\neq i}\sum_{k:k\neq i} a_{ij}a_{ik}g(x^j-x^i)+\sum_{j:j\neq i}\sum_{k:k\neq i} a_{ij}a_{ik}g'(x^j-x^i)g'(x^k-x^i)\,.
	\end{align*}
	Using that $g'$ is bounded and $a_{ij}\leq 1$, we obtain
	\begin{align*}
	\cL_S(\psi^i(\bfx,\bfa)^2)&\leq C\sum_{j:j\neq i} \psi^j(\bfx,\bfa)\psi^i(\bfx,\bfa)+CN\psi^i(\bfx,\bfa)^2\\
	&+ 2C\sum_{j:j\neq i} a_{ij}\psi^i(\bfx,\bfa)+2C\psi^i(\bfx,\bfa)^2\\
	&+ C\sum_{k:k\neq i} \psi^k(\bfx,\bfa)\psi^i(\bfx,\bfa)+CN \psi^i(\bfx,\bfa)^2\\
	&+ 2C\sum_{j:j\neq i}\sum_{k:k\neq i} a_{ij}a_{ik}g(x^j-x^i)+C\sum_{j:j\neq i}\sum_{k:k\neq i} a_{ij}a_{ik}.
	\end{align*}
	Since $1\leq g$ we can write each of these expressions in terms of $\psi^i, \psi^j$.
	\begin{align*}
	&\cL_S(\psi^i(\bfx,\bfa)^2)\leq 2C\sum_{j:j\neq i} \psi^j(\bfx,\bfa)\psi^i(\bfx,\bfa)+2CN\psi^i(\bfx,\bfa)^2+ 7C\psi^i(\bfx,\bfa)^2.
	\end{align*}
	By Young's inequality
	\begin{align*}
	&\cL_S(\psi^i(\bfx,\bfa)^2)\leq C\sum_{j:j\neq i} \psi^j(\bfx,\bfa)^2+CN\psi^i(\bfx,\bfa)^2+2CN\psi^i(\bfx,\bfa)^2+ 7C\psi^i(x,a)^2.
	\end{align*}
	Summing over $i$, 
	\begin{align}
	\cL_S\Itwo(\bfx,\bfa) &\leq CN\Itwo(\bfx,\bfa)+CN\Itwo(\bfx,\bfa)+2CN\Itwo(\bfx,\bfa)+ 7C\Itwo(\bfx,\bfa) \nonumber\\
	&\leq CN\Itwo(\bfx,\bfa).\label{eq:LSItwo}
	\end{align}
	Finally, combining \eqref{eq:LFItwo} and \eqref{eq:LSItwo},  we have
	\begin{align*}
	\partial_t\cP_t^{N,\ep} \Itwo(\bfx,\bfa) &= \cP_t^{N,\ep}(\frac{1}{\varepsilon}\cL_F\Itwo(\bfx,\bfa) + \cLs\Itwo(\bfx,\bfa))\\
	&\leq \cP_t^{N,\ep}\left(-\frac{\nu_d}{\varepsilon}\Itwo(\bfx,\bfa) +\frac{2}{\varepsilon N}\sum_{i=1}^N\psi^i(\bfx,\bfa)+\frac{C}{\varepsilon}+ CN\Itwo(\bfx,\bfa)\right).
	\end{align*}
	By \eqref{eq:pisiineq},
	\begin{align*}
	\partial_t\cP_t^{N,\ep} \Itwo(\bfx,\bfa) &\leq \left(-\frac{\nu_d}{\varepsilon}+CN\right)\cP_t^{N,\ep}\Itwo(\bfx,\bfa) +\frac{C}{\varepsilon}+\frac{2}{\varepsilon}C_1+\frac{2}{\varepsilon}\Ione(\bfx,\bfa)e^{-C_2\frac{t}{\varepsilon}}.
	\end{align*}
	If there exists $\varepsilon_0>0$ such that  $N\varepsilon< \varepsilon_0$ then we can integrate this inequality to find constants $C_1,C_2,C_3,C_4>0$ such that
	\begin{equation*}
	\cP_t^{N,\ep}\Itwo(\bfx,\bfa) \leq C_1 + C_3\Ione(\bfx,\bfa)e^{-C_2\frac{t}{\varepsilon}}+ \Itwo(\bfx,\bfa)e^{-C_4\frac{t}{\varepsilon}}.
	\end{equation*}
\end{proof}

\section{The large $N$ limit: proof of \eqref{EEpart} and \eqref{EEpartunif}}\label{Sec:MF}
The purpose of this section is to estimate the term ${E}^{\rm part}$ given by \eqref{Ebar}. {In the original dynamics \eqref{slow}-\eqref{fast}, any particle typically interacts at a given time with a bounded (in $N$) number of other particles; hence mean-field techniques cannot be employed. However, the averaged process \eqref{slow2} is of mean-field type, so that classical techniques can be used.}
We introduce an auxiliary system of independent particles $Y^i_t$, such that
$Y^{i,N}_0=X^{i,N,\ep}_0=\bar{X}_0^{i,N} $ and
\eq{\label{interY}
	dY^{i,N}_t = - \nabla V(Y_t^{i,N})dt+ \left(\bar{K}\ast \rho_t \right) (Y^{i,N}_t) dt +\sqrt{2D}dB^i_t~.
}
The Brownian motions $B^i_t$ are the same as those for the $\bar{X}^i$ in equation \eqref{slow2}. From now on for simplicity we drop the superscript $N$ and we simply use the notation $Y_t^i, \bar{X}_t^i$, respectively,  instead of {$Y_t^{i,N},\bar X_t^{i,N}$}, respectively.  The $Y^i$ are independent, and the law $\mu_t(y)dy$ of each $Y^i$ solves the equation (recall that the $X^i_0$ are iid with law $\rho_0$): 
\eq{
	\partial_t \mu_t(y) &=\nabla\cdot(\mu_t\nabla V(y))-\nabla \cdot \left( (\bar{K}\ast\rho_t)\mu_t \right)+ D\Delta \mu_t ,\\
	\mu_t(y)|_{t=0}&=\rho_0(y).
}
Note that, under Assumption \ref{H3new} and Hypothesis \ref{H1}, equation \eqref{macro} has a unique solution (say in $L^2$)\footnote{While hard to find in the literature, this is standard under our assumptions on $\rho_0$, $V$, $K$; indeed much more general results are available for this equation, e.g. uniqueness of measure-valued solutions, but this is not needed here.} So, comparing \eqref{macro} and the above equation, we conclude that the law of any $Y_i$ at time $t$ is precisely $\rho_t(y)dy$. 
We therefore get
\begin{align}\label{EEbar}
E^{\rm part} & = \mathbb{E}\lr{\frac1N\sum_{i}u(\bar X^i_t)-\int u(y) \rho_t(y) dy}^2 \nonumber\\
&\leq  2 \mathbb{E}\lr{\frac1N\sum_{i}u(\bar X^i_t)-\frac1N\sum_{i}u(Y^i_t)}^2 \nonumber\\
&+2\mathbb{E}\lr{\frac1N\sum_{i}u(Y^i_t)-\int u(y) \rho_t(y) dy}^2 \nonumber \\
& \leq 2L_u^2\mathbb{E}\left[(\bar{X}^i_t-Y^i_t)^2\right]
\nonumber\\
& +2\mathbb{E}\lr{\frac1N\sum_{i}u(Y^i_t)-\int u(y) \rho_t(y) dy}^2,
\end{align}
where $L_u$ stands for the Lipshitz constant of $u$ and we have used the particles'  exchangeability. Our goal now is to estimate both terms in \eqref{EEbar}. The main tool is the following classical lemma.
\begin{lemma}\label{lem:step2}
	Assume Hypotheses \ref{H1} and Assumption \ref{H3new}. Then there exists a constant $C$ such that for any time $T>0$
	\[
	\sup_{t\in[0, T]}\mathbb{E}\left[\left(\bar{X}_t^i-Y_t^i\right)^2
	\right] \leq \frac{C}{N}e^{4LT},
	\]
	where $L$ is related to the Lipschitz constants of $\bar{K}$ and $\nabla V$.\\
	Assume in addition that Hypothesis \ref{H2} holds with $\kappa_3>\kappa_{\rm mf}:=2L_{\bar{K}}$, where $L_{\bar{K}}$ is the Lipschitz constant of $\bar{K}$. Then there exists a constant $C$ such that
	\[
	\sup_{t\geq 0}\mathbb{E}\left[\left(\bar{X}_t^i-Y_t^i\right)^2\right]
	\leq \frac{C}{N}.
	\]
\end{lemma}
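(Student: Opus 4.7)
I would prove this by a classical Sznitman-type synchronous coupling argument, adapted to the $N$-dependent kernel $\bar{K}^N$. Set $Z^i_t := \bar{X}^i_t - Y^i_t$; since the two SDEs share initial conditions and driving Brownian motions, the noise cancels and $Z^i_t$ solves a drift-only equation. Differentiating $\E[|Z^i_t|^2]$ in time I would use two ingredients: (i) the one-sided bound $(\nabla V(\bar X^i) - \nabla V(Y^i)) \cdot Z^i \geq \kappa_3 |Z^i|^2$ under \ref{H2}, or the plain Lipschitz bound $\leq L_V |Z^i|^2$ otherwise; (ii) a careful decomposition of the interaction difference.

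Setting $\tilde K^N(z) := N \bar{K}^N(z)$, I would write $\sum_{j\neq i}\bar{K}^N(\bar X^j - \bar X^i) = \tfrac{1}{N}\sum_{j\neq i}\tilde K^N(\bar X^j - \bar X^i)$ and split the interaction discrepancy into three pieces
\begin{align*}
T_1 &:= \tfrac{1}{N}\sum_{j\neq i}\bigl[\tilde K^N(\bar X^j - \bar X^i) - \tilde K^N(Y^j - Y^i)\bigr], \\
T_2 &:= \tfrac{1}{N}\sum_{j\neq i}\bigl[\tilde K^N - \bar K\bigr](Y^j - Y^i), \\
T_3 &:= \tfrac{1}{N}\sum_{j\neq i}\bar K(Y^j - Y^i) - (\bar K * \rho_t)(Y^i).
\end{align*}
A direct computation from \eqref{eq:Kbardef} and the scaling $\nu_f^N = \nu_f/N$ gives $\|\tilde K^N - \bar K\|_\infty = O(1/N)$ and a Lipschitz constant $L_{\tilde K^N}$ for $\tilde K^N$ converging to $L_{\bar K}$ as $N \to \infty$. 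Thus $T_1$ is controlled by $L_{\tilde K^N}(|Z^i| + \tfrac{1}{N}\sum_{j\neq i}|Z^j|)$; $T_2$ is $O(1/N)$ deterministically; $T_3$ is the only genuinely random piece. Since the $Y^j$ ($j \neq i$) are i.i.d.~with law $\rho_t$ and $\E[\bar K(Y^j - Y^i)\mid Y^i] = (\bar K * \rho_t)(Y^i)$, a conditional variance computation yields $\E[|T_3|^2] \leq C/N$.

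Pairing each $T_k$ with $Z^i$, using Cauchy--Schwarz--Young, and exploiting exchangeability to bound $\E[|Z^i||Z^j|] \leq \E[|Z^i|^2]$, I arrive at a Grönwall-type inequality
\[
\tfrac{d}{dt}\E[|Z^i_t|^2] \leq \bigl(-2\kappa_3 + 4L_{\tilde K^N} + o(1)\bigr)\,\E[|Z^i_t|^2] + \tfrac{C}{N},
\]
with $-2\kappa_3$ replaced by $+2L_V$ (from the Lipschitz estimate on $\nabla V$) when \ref{H2} is not assumed. Part one follows from Grönwall directly, producing the factor $e^{4LT}$ with $L$ determined by $L_V$ and $L_{\bar K}$. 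For part two, the assumption $\kappa_3 > \kappa_{\rm mf} = 2L_{\bar K}$ guarantees that, for $N$ sufficiently large, the coefficient of $\E[|Z^i_t|^2]$ is strictly negative, so Grönwall delivers the uniform-in-time bound $\sup_{t \geq 0}\E[|Z^i_t|^2] \leq C/N$.

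The most delicate step will be controlling $T_1$ and $T_3$ with the correct $N$-dependence while keeping the effective Lipschitz constant multiplying $\E[|Z^i|^2]$ arbitrarily close to $L_{\bar K}$; this is precisely what pins down the threshold $\kappa_{\rm mf} = 2L_{\bar K}$ and requires tracking the convergence $\tilde K^N \to \bar K$ together with its derivative, which is immediate from the explicit formula \eqref{eq:Kbardef} and the compact support of $\varphi_R$.
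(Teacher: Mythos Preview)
Your proposal is correct and matches the paper's synchronous-coupling argument almost exactly: the paper's terms $I_2, I_3, I_4$ are your $T_2, T_1, T_3$, with the cosmetic difference that the paper performs the swap $\bar K^N \to \tfrac1N\bar K$ at the $\bar X$-points \emph{before} the Lipschitz step, so it works directly with $L_{\bar K}$ rather than the $N$-dependent $L_{\tilde K^N}$. Both routes lead to the same Gr\"onwall inequality and the same threshold $\kappa_{\rm mf} = 2L_{\bar K}$.
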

\begin{proof}[Proof of Lemma \ref{lem:step2}]
	We first compare equations \eqref{slow2} with \eqref{interY}, to obtain
	\eq{
		d\lr{\bar{X}^i_ t -Y^i_t}= &- \lr{\nabla V(\bar X_t^i)-\nabla V(Y_t^i)}dt\\
		&+ \left\{\left[\sum_{j\neq i} \bar{K}^N(\bar{X}^j_t-\bar{X}^i_t)\right]-\left(\bar{K}\ast \rho_t \right) (Y^i_t)\right\} dt,
	}
	which is a purely deterministic equation. Therefore, multiplying both sides by \\$\lr{\bar{X}^i_ t \!-\!Y^i_t}$ and integrating with respect to time, we get:
	\eq{\label{firsteq_nosum}
		\frac12  \left(\bar{X}^i_t -Y^i_t\right)^2(t) &= 
		-\int_0^t \left[\nabla V(\bar{X}^i_s)-\nabla V (Y^i_s) \right][\bar{X}^i_s -Y^i_s] ds  \\
		&+\int_0^t\left[\sum_{j\neq i} \lr{\bar{K}^{N} -\frac1N\bar{K}} (\bar{X}^j_s-\bar{X}^i_s)\right][\bar{X}^i_s -Y^i_s] ds \\
		&+\int_0^t\frac1N \left[\sum_{j\neq i} \bar{K} (\bar{X}^j_s-\bar{X}^i_s)-\bar{K} (Y^j_s-Y^i_s)\right][\bar{X}^i_s -Y^i_s] ds \\
		& +\int_0^t \left\{\frac1N \left[\sum_{i\neq j}\bar{K}(Y^j_s-Y^i_s) \right]- (\bar{K}\ast\rho_s)(Y^i_s)\right\}[\bar{X}^i_s-Y^i_s]ds \\ 
		& =: \int_0^t\lr{I_1+\ldots +I_4} ds.
	}
	To estimate $I_1$ we either use the boundedness of the second derivative of $V$ to get
	\[
	\mathbb{E}[I_1] =- \mathbb{E}\left[\nabla V(\bar{X}^i_s)-\nabla V (Y^i_s) \right][\bar{X}^i_s -Y^i_s] \leq C_V\mathbb{E}\left[\left(\bar{X}^i_s -Y^i_s\right)^2\right],
	\]
	or the convexity of the external potential $V$ from Hypothesis \ref{H2} to get:
	\[
	\mathbb{E}[I_1] =- \mathbb{E}\left[\nabla V(\bar{X}^i_s)-\nabla V (Y^i_s) \right][\bar{X}^i_s -Y^i_s] \leq -\kappa_3 \mathbb{E}\left[\left(\bar{X}^i_s -Y^i_s\right)^2\right] ,
	\]
	Using expressions for $\bar K^N$ and $\bar K$, \eqref{eq:Kbardef} and  \eqref{barK}, respectively, and the boundedness of $\bar K$ and $\varphi_R$, we arrive at the inequality
	\[
	|N\bar{K}^N(x)-\bar{K}(x)| \leq  \frac{C}{N}.
	\]
	Therefore, using the particles' exchangeability we obtain 
	\eq{\mathbb{E}[I_2]\leq \frac{C}{N}\mathbb{E}|\bar{X}^i_s -Y^i_s|\leq \frac{C}{N}\lr{\mathbb{E}\left[\left(\bar{X}^i_s -Y^i_s\right)^2\right]}^{1/2}
		.}
	For the term $I_3$ we use the Lipshitz continuity of $\bar K$
	\eqh{
		|I_3| &\leq L_{\bar{K}} \frac1N \sum_{j\neq i} \left[|\bar{X}^j_s-\bar{X}^i_s -(Y^j_s-Y^i_s)||\bar{X}^i_s -Y^i_s|\right]  \\
		&\leq  L_{\bar{K}} \frac1N \sum_{j\neq i} \left[|\bar{X}^j_s-Y^j_s||\bar{X}^i_s -Y^i_s|+|\bar{X}^i_s -Y^i_s|^2\right] 
	}
	and so, taking the expectation and using exchangeability again, we have
	\eq{
		\mathbb{E}[I_3]\leq 2L_{\bar{K}}  \mathbb{E}\left[\left(\bar{X}^i_s -Y^i_s\right)^2\right].
	}
	Finally, for $I_4$,  by direct expansion we obtain
	\begin{equation}
	\begin{split}
	|\mathbb{E}[I_4]|&=\left|\mathbb{E}\left\{\frac1N \left[\sum_{i\neq j}\bar{K}(Y^j_s-Y^i_s) \right]- (\bar{K}\ast\rho_s)(Y^i_s)\right\}[\bar{X}^i_s-Y^i_s]\right|
	\\
	&\leq  \left(\mathbb{E}\left\{\left[\frac1N \sum_{j\neq i}\bar{K}(Y^j_s-Y^i_s)\right] - (\bar{K}\ast\rho_s)(Y^i_s)\right\}^2\right)^{1/2}\!\!\! \lr{\mathbb{E}\left[\left(\bar{X}^i_s -Y^i_s\right)^2\right]}^{1/2}\\
	&\leq \frac{C}{\sqrt{N}}\lr{\mathbb{E}\left[\left(\bar{X}^i_s -Y^i_s\right)^2\right]}^{1/2}+\frac{C'}{N},
	\end{split}
	\end{equation}
	where the pre-factor $C/\sqrt{N}$ and the last addend $C'/N$ in the last inequality follow from a direct computation;  indeed:
	\eq{\label{N12bn}
		&\mathbb{E}\left\{\left[\frac1N \sum_{j\neq i}\bar{K}(Y^j_s-Y^i_s)\right] - (\bar{K}\ast\rho_s)(Y^i_s)\right\}^2\\
		&=\frac1{N^2}\sum_{k,j\neq i}\mathbb{E}\left[\left(\bar{K}(Y_s^j-Y_s^i)-(\bar{K}\ast \rho_s)(Y^i_s)\right)\left(\bar{K}(Y_s^k-Y_s^i)-(\bar{K}\ast \rho_s)(Y^i_s)\right) \right]\\
		&+\frac1{N^2}\sum_{j\neq i}\mathbb{E}\left[\left(\bar{K}(Y_s^j-Y_s^i)-(\bar{K}\ast \rho_s)(Y^i_s)\right)^2 \right] 
		+\left(\frac{1}{N}-\frac{1}{N-1}\right)
		\mathbb \sum_{j\neq i} (\bar{K}\ast \rho_s)(Y^i_s); 
	}
	the last term above is smaller than $\lVert\bar{K}\rVert_\infty /N=C'/N$. Notice now that the random variables $\bar{K}(Y_s^j-Y_s^i)-(\bar{K}\ast \rho_s)(Y^i_s)$, $\bar{K}(Y_s^k-Y_s^i)-(\bar{K}\ast \rho_s)(Y^i_s)$ are independent conditionally on $Y^i_s$ for $k\neq j$, and have zero conditional expectation (because the law of any $Y^l_s$ is equal to the same $\rho_s(y)dy$); thus only the $N-1$ terms in the first term of the second line of \eqref{N12bn} are non zero, and they can be estimated by a constant proportional to $\lVert \bar{K}\rVert^2_\infty$.
	Therefore, taking expectation in \eqref{firsteq_nosum}, and gathering all estimates from above we obtain either
	\eq{
		\label{eq:withH1}
		\frac12\mathbb{E}\left[\left(\bar{X}^i_t -Y^i_t\right)^2\right](t) \leq& \frac{C}{N}\int_0^t \lr{\mathbb{E}\left[\bar{X}^i_s -Y^i_s\right]^2}^{1/2}\, ds +(2L_{\bar{K}} +C_V)\int_0^t\mathbb{E}[ \bar{X}^i_s -Y^i_s]^2\,ds \\
		&+\int_0^t \left[\frac{C}{\sqrt{N}}\lr{\mathbb{E}\left[\bar{X}^i_s-Y^i_s\right]^2}^{1/2}+\frac{C'}{N}\right]ds,
	}
	or, if $V$ satisfies Hypothesis \ref{H2}
	\eq{
		\frac12\mathbb{E}\left[\left(\bar{X}^i_t -Y^i_t\right)^2\right](t) \!\leq& (-\kappa_3+2L_{\bar{K}}) \int_0^t \!\mathbb{E}[\bar{X}^i_s -Y^i_s]^2\, ds
		+\frac{C}{N}\int_0^t\lr{\mathbb{E}\left[\bar{X}^i_s -Y^i_s\right]^2}^{1/2}\, ds\\
		&+\int_0^t\left[\frac{C}{\sqrt{N}}\lr{\mathbb{E}\left[\bar{X}^i_s-Y^i_s\right]^2}^{1/2}+\frac{C'}{N}\right]ds.
	}
	Recall that $\kappa_3>0$, so for $\kappa_3>2L_{\bar{K}}$ we have $-\lambda:=-\kappa_3+2L_{\bar{K}} <0$, and, changing the constant $C$, we can rewrite the above inequality as 
	\begin{equation}
	\alpha'(t)\leq -2\lambda\alpha(t) +\frac{C}{\sqrt{N}}\alpha^{1/2}(t)+\frac{C}{N},
	\label{eq:before_gronwall}
	\end{equation}
	where we denoted $\alpha(t) =\mathbb{E}\left[\left(\bar{X}^i_t -Y^i_t\right)^2\right](t)$. 
	{We choose $\delta>0$ such that $\tilde{\lambda}=\lambda-\delta>0$.
		From \eqref{eq:before_gronwall}, we have
		\begin{equation}
		\alpha'(t)\leq -2(\lambda-\delta)\alpha(t) +\frac{1}{N}\left(\frac{C^2}{8\delta}+C\right) = -2\tilde{\lambda}\alpha(t) +\frac{C}{N},
		\label{eq:before_gronwall2}
		\end{equation}
		where we have redefined $C$ in the last equality. Then Gronwall's lemma ensures
		that $\alpha(t)\leq C/(2\tilde{\lambda}N)$ (recall $\alpha(0)=0$). Hence (again considering  $C$ a generic constant)
		\eq{
			{\sup_{t\geq 0}\mathbb{E}\left[\left(\bar{X}^i_t -Y^i_t\right)^2\right]\leq \frac{C}{N}}.
		}
		In the case when $V$ satisfies only Hypothesis \ref{H1}, we go back to \eqref{eq:withH1}. A similar reasoning as above proves a local-in-time estimate:
		\eq{
			{\mathbb{E}\left[\left(\bar{X}^i_t -Y^i_t\right)^2\right]\leq \frac{C}{N} e^{L t}},
		}
		for some $L>2(2L_{\bar{K}}+C_V)$.
	}
\end{proof}

Now we come back to \eqref{EEbar}. The first term on the r.h.s. is controlled by Lemma \ref{lem:step2}. 
Observe also that the second term on the r.h.s. can be controlled similarly to \eqref{N12bn}, and is therefore small of order $\frac1N$, and thus \eqref{EEbar} implies
\begin{equation*}
\sup_{t\in[0,T]}\lv E^{\rm part} \rv    
\leq O\lr{\frac1N} + \frac{C}{N}e^{LT},    
\end{equation*}
for any $T>0$, if $V$ satisfies \ref{H1}, or
\begin{equation*}
\sup_{t\geq 0}\lv E^{\rm part} \rv
\leq \frac{C}{N}    
\end{equation*}
for $V$ satisfying in addition Hypothesis \ref{H2}.

\section*{Acknowledgements}
{Our attention was drawn to this question by many discussions and works with Pierre Degond, whose  contribution is gratefully acknowledged. In particular, the possibility of  combining averaging and mean-field techniques was proposed in \cite{BDPZ}, a common work of two of the authors with P. Degond and D. Peurichard.} We acknowledge the generous support of the International Centre for Mathematical Sciences,  Research-in-Groups programme. P. Dobson was supported by the Maxwell Institute Graduate School in Analysis and its
Applications (MIGSAA), a Centre for Doctoral Training funded by the UK Engineering and Physical
Sciences Research Council (grant EP/L016508/01), the Scottish Funding Council, Heriot--Watt
University and the University of Edinburgh. Last but not least, the authors are very grateful to Iain Souttar, who spotted a mistake in an earlier version.

\appendix 
\numberwithin{equation}{section}

\section{Sketch of well-posedness of  the slow-fast  system \eqref{slow}-\eqref{fast}}\label{app:A}

The well-posedness for \eqref{slow}-\eqref{fast} is completely standard (under our assumptions on the potential $V$ and on the interaction force $K$), however we could not find in the literature a result that can be cited directly and that contains as a subcase systems of this form, so we just sketch the proof. 

First notice that for a general SDE driven by a Poisson process, i.e. an SDE of the form 
$$
dY_t=b(Y_{t_{-}}) dt + \sigma(Y_{t_{-}}) dN_t,  \quad Y_0=y
$$
to be well-posed, it is sufficient for $\sigma$ to be say an everywhere defined function, and $b$ to be such that the solution of the ODE $dY_t = b(Y_t) dt$ exists globally (and, in particular, starting from any initial datum).  Indeed,  the solution $Y_t$ can be iteratively constructed as follows: i) consider an outcome of Poisson events $T_1, T_2, \dots$. In between events, i.e. on each interval $[T_j, T_{j+1})$,  the dynamics follows the ODE   $dY_t = b(Y_t) dt$, so in particular we know the value of $Y_{T_{{j+1}_-}}$. The value after the jump is  given by $Y_{T_{j+1}}=Y_{T_{j+1}-}+ \sigma (Y_{T_{{j+1}_-}})$. After the jump the solution will resume evolving according to the ODE with initial datum $Y_{T_{j+1}}$. 

The strong solution (i.e. pathwise solution) to  \eqref{slow}-\eqref{fast} can be built using the same logic; to simplify notation, suppose first $N=2$ so that there are only two particles and one link, $A(t)\in \{0,1\}$.     In between firing events of either $N^{d,\ep}$
or $N^{f,\ep}$ the dynamics \eqref{slow} evolves according to either the simple Langevin equation
$$
\left\{
\begin{array}{c}
dX^{1,2, \ep}(t) = - \nabla V(X^{1,2, \ep}(t))
+ \sqrt{2D} dB^1_t\\
dX^{2,2, \ep}(t) = - \nabla V(X^{2,2, \ep}(t))
+ \sqrt{2D} dB^2_t
\end{array}
\right.
$$
or 
$$
\left\{
\begin{array}{c}
dX^{1,2, \ep}(t) = - \nabla V(X^{1,2, \ep}(t))
+ K(X^{2,2, \ep}(t) - X^{1,2, \ep}(t))+ \sqrt{2D} dB^1_t\\
dX^{2,2, \ep}(t) = - \nabla V(X^{2,2, \ep}(t))
+ K(X^{1,2, \ep}(t) - X^{2,2, \ep}(t))
+ \sqrt{2D} dB^2_t
\end{array}
\right.
$$
Under our assumptions both $\nabla V$ and $K$ are globally Lipshitz, so global well-posedness for both of the above SDEs is completely standard, once the initial datum has finite second moment.   
Similarly for any $N>2$. 
Note also that, for every $N$ fixed, there is almost surely a strictly positive amount of time between any two firing events of $N^{d,\ep}$ and $N^{f,\ep}$ (non-explosivity): with the notation used in the introduction, one can consider first the sum of $N^{d,\ep}$ and  $\breve N^{f,\ep}$,  which is still a Poisson process, with rate $\nu_d+\nu_f$. Hence, by elementary properties of simple Poisson processes, there is almost surely a strictly positive amount of time between any two firing events of $N^{d,\ep}$ and $\breve N^{f,\ep}$. But, by construction, the time between firing events of $N^{d,\ep}$ and $N^{f,\ep}$ is longer than the time between firing events of $N^{d,\ep}$ and $\breve N^{f,\ep}$, as the realization of the sum of  $N^{d,\ep}$ and $N^{f,\ep}$ is a subsequence of the realization of the sum of $N^{d,\ep}$ and $\breve N^{f,\ep}$. Observe moreover that the paths of  \eqref{slow} will be a.s. continuous,  as  the coefficients $A_{ij}^{N, \ep} \in \{0,1\}$ only have the effect of switching on or off the interaction kernel $K$.

\end{document}